\documentclass[11pt]{article}
\usepackage{anyfontsize}
\usepackage{graphicx} 
\usepackage{amsmath,amsthm,amssymb,mathrsfs}

\newtheorem{theorem}{Theorem}
\newtheorem{proposition}{Proposition}
\newtheorem{lemma}{Lemma}

\newtheorem{assumption}{Assumption}
\newtheorem{remark}{Remark}
\newtheorem{example}{Example}

\begin{document}

\title{Projection Operator Stochastic Equations for Non-Markovian Quantum Systems Under Continuous Measurement-Based Feedback}
 
\author{Hendra I. Nurdin\\ School of Electrical Engineering and Telecommunications\\ University of New South Wales, Sydney, NSW 2052. \\
\texttt{Email: h.nurdin@unsw.edu.au}}
\date{}




\maketitle

\begin{abstract}
 Quantum Markov models have been successfully used to accurately model various physical quantum systems in fields such as quantum optics, optomechanics and superconducting circuits and they provide  the basis  for (measurement-based) quantum feedback control. However, the quantum Markov assumption is a strong one and it is not expected to hold for general quantum systems of interest. The projection operator approach is one approach that has been developed to model non-Markovian quantum systems by considering its embedding in a larger Markovian quantum system, but mainly in the context of quantum master equations for the dynamics of the unmonitored reduced quantum state of a quantum system. This approach was recently adapted for continuously measured non-Markovian quantum systems, which enables open-loop control but did not yet consider the presence of feedback of the stochastic measurement record, deriving non-Markovian SDEs for the evolution of the projected state of the Markovian embedding. This paper generalizes these stochastic equations to the setting  of stochastic feedback based on the continuous-measurement record and shows that the equations take the same form but that previously deterministic terms become stochastic ones which depend on the measurement record, as would be intuitively expected. The stochastic equations are obtained for a generalized class of measurements that includes continuous (possibly adaptive) homodyne and photon counting measurements.
\end{abstract}

\section{Introduction}
Quantum Markov models, as the quantum analogue of classical Markov models, have played a prominent role in the modeling of open quantum systems. These models yields the Lindblad quantum master equation for describing the (unconditional) reduced state of open quantum systems  and the quantum filtering equation (stochastic master equation) for describing the stochastic quantum evolution of an open quantum system undergoing continuous measurement in the Heisenberg (Schr\"{o}dinger  picture) \cite{BvHJ07,BvH08,WM10,NY17}. In a quantum Markov model the open quantum system interacts with an environment made of  quantum (white noise) fields that do not retain memory of the system's past states. The closed quantum system consisting of the open quantum system and its quantum field environment has a joint unitary evolution that is given by the solution of a quantum stochastic differential equation (QSDE) \cite{HP84,GC85,KRP92,GZ04}.   

Although quantum Markov models have found success as models for various physical systems found in fields such as quantum optics, optomechanics and superconducting circuits, the quantum Markov assumption is rather strong and is not anticipated to hold for a wide range open quantum systems \cite{BP07}. In particular, it may not be appropriate to model the environment as memoryless quantum white noises. Indeed, some studies in the literature have indicated that quantum noise in near-term quantum computers exhibit non-Markovian features, see, e.g., \cite{White20}. Therefore, it remains an important endeavour to look beyond quantum Markov models and develop the appropriate tools.

 One approach to modeling non-Markovian quantum systems is to model the environment in a compound way as consisting of another quantum system, called an {\em auxiliary  quantum system}, which may in turn be coupled to a quantum white noise environment; see \cite{Nurd23,Nurd25} and the references therein. The auxiliary acts a quantum memory subsystem whose quantum state can depend on past states of the open quantum system of interest, which will be referred to as the {\em principal quantum system}. An auxiliary that consists of a collection of quantum harmonic oscillators is often referred to as pseudomodes \cite{DBG01,Mascherpa20}. In this model of a principal quantum system with compound environment (auxiliary system together with quantum white noises), projection operator methods have been developed to study the dynamics of the principal quantum system \cite{BP07}\cite[\S 3]{Bacchini11}. It is based on using a linear projection superoperator $\mathcal{P}$  with certain properties  (see the definition herein  in \S \ref{sec:main-results}) that projects the joint density operator $\rho_{\rm sa}$ of the principal and auxiliary to a subspace of density operators such that the reduced state of the principal quantum system alone can be obtained from the projected states by tracing out the auxiliary. Within this projection operator framework, one can obtain the so-called Nakajima-Zwanzig quantum master equation \cite{BP07,Bacchini11} by eliminating the component of $\rho_{\rm sa}$ that belongs to the subspace orthogonal to $\mathcal{P}$ from either a Liouville-von Neumann equation (corresponding to Hamiltonian coupling between the principal and auxiliary) or from a Lindblad quantum master equation for the principal and auxiliary. Further assumptions lead to more simplified non-Markovian master equations such as the time-convolutionless non-Markovian quantum master equation. 

It was shown recently in \cite{Nurd25} that the projection operator approach and the Nakajima-Zwanzig equation can be adapted to non-Markovian principal quantum systems that are coupled to a probe and for which the probe is continuously measured. The projected density operator of the principal and auxiliary evolves stochastically according to a new non-Markovian stochastic differential equation (SDE) that includes an integral term over a stochastic two-time kernel which depends on the past of the projected  density operator. However, while the model considered in \cite{Nurd25} supports open-loop control, it does not include feedback based on the continuous-measurement record. In this work we show that the non-Markovian SDEs of \cite{Nurd25} can be generalized to the case when feedback based on the continuous measurement record is included. Moreover, this paper derives the results for a very general class of measurement, beyond the standard ones considered in \cite{Nurd25}.  

This paper is structured as follows. Section \ref{sec:non-Markov-model} sets up the modeling framework for non-Markovian quantum systems considered in this paper. Section \ref{sec:controlled-flows} reviews controlled QSDEs, controllled quantum flows and and the associated output equation under a very general class of measurements, following \cite{BvH08}. This is followed in Section \ref{sec:controlled-quantum-fitering} with an overview of the quantum filtering equation for controlled quantum systems.  Section \ref{sec:main-results} presents the main results of this paper. This is Theorem \ref{thm:non-Markovian-diffusion} for a non-Markovian quantum system in the diffusive case (i.e., under  generalized continuous homodyne measurements) and Theorem \ref{thm:non-Markovian-jump} for the pure jump  case (i.e., (i.e., under generalized continuous photon counting measurements) 
followed by a discussion on the inherent feedback loop structure of the non-Markovian SDEs presented in these theorems. Finally, Section \ref{sec:conclu} gives a conclusion for the paper and directions for future research.  \\

\noindent \textbf{Notation.} We will adopt the notation in \cite{Nurd25} with additional notations introduced in the text as required. $X^{\top}$ denotes the transpose of a matrix $X$, $X^{\dag}$ denotes the adjoint of a Hilbert space operator $X$ (the conjugate transpose when $X$ is a matrix). $I_n$ will denote an $n \times n$ identity matrix and $I$ can denote either an identity matrix (whose dimension can be inferred from the context), an identity map or an identity operator. $\mathrm{Tr}$ denotes the trace of a matrix or an operator. A signal (function of time) will be denoted by $V_{\cdot}$ where the subscript $\cdot$ is a placeholder for time. If a signal is clear from its context then it will be denoted simply as $V$ (without the subscript) or, with as slight abuse of notation, as $V_t$. For a signal $Y_{\cdot}$, $Y_{0:t}=\{Y_{\tau}\}_{0 \leq \tau \leq t}$. If $\mathfrak{h}_1$ and $\mathfrak{h}_2$ are Hilbert spaces, $\mathscr{L}(\mathfrak{h}_1;\mathfrak{h}_2)$ denotes the class of all linear operators mapping from $\mathfrak{h}_1$ to $\mathfrak{h}_2$. If $\mathfrak{h}_1 =\mathfrak{h} =\mathfrak{h}_2$ then it is written simply as  $\mathscr{L}(\mathfrak{h})$. If $X,Y \in \mathscr{L}(\mathfrak{h})$ then $[X,Y]=XY-YX$ and $\{X,Y\} = XY + YX$.  If $X$ is an operator on the composite Hilbert space $\mathfrak{h}_1 \otimes \mathfrak{h}_2$ then $\mathrm{Tr}_{\mathfrak{h}_j}(X)$ denotes the partial trace of $X$ by tracing out over the Hilbert space $\mathfrak{h}_j$ ($j=1,2$). If $O_j \in \mathscr{L}(\mathfrak{h}_j)$ then $O_j$ is also used as a shorthand  for the ampliation of $O_j$ to the composite Hilbert space $\mathfrak{h}_1 \otimes \mathfrak{h}_2$. Also, $\delta_{jk}$ is  the Kronecker delta and 
the space of bounded operators in $\mathcal{L}(\mathfrak{h)}$ will be denoted by $B(\mathfrak{h})$. 

\section{Quantum non-Markovian model}
\label{sec:non-Markov-model}
We consider a principal quantum of interest on a finite dimensional Hilbert space $\mathfrak{h}_{\rm s}$ that is coupled to a compound environment consisting of an auxiliary quantum system on a finite-dimensional Hilbert space $\mathfrak{h}_{\rm a}$ which is coupled to $n_a$ environmental quantum  fields whose operators are labeled with the subscripts $k=1,\ldots,n_a$. The principal system is also separately coupled to a probe, whose operators are indexed by the subscript $0$. The boson Fock space \cite{KRP92} for each quantum field is denoted by $\Gamma_s(L^2(\mathbb{R}_+))$, where $\mathbb{R}_+ = [0,\infty)$ and $L^2(A)$ denotes the space of square integrable complex-valued functions over a Lebesque measurable set $A \subseteq \mathbb{R}_+^n$. The boson Fock space has the factorization property $\Gamma_s(L^2(\mathbb{R}_{+}))=\Gamma_{t]} \otimes \Gamma_{(t}$ \cite{KRP92}, where $\Gamma_{t]}=\Gamma_s(L^2([0,t]))$ and $\Gamma_{(t}=\Gamma_s(L^2((t,\infty)))$. The $n_{\rm a}+1$ quantum  fields of the environment and probe lives on the $(n_{\rm a}+1)$-fold tensor product of the boson Fock space  $\Gamma^{\otimes n_{\rm a}+1}_s(L^2(\mathbb{R}_+))=\otimes_{j=1}^{n_{\rm a}+1} \Gamma_s(L^2(\mathbb{R}_+))= \Gamma_s(L^2(\mathbb{R}^{n_{\rm a}+1}_+))$ and it has an analogous factorization property $\Gamma_s^{\otimes n_{\rm a}+1}(L^2(\mathbb{R}))=\Gamma_{t]}^{\otimes (n_{\rm a} + 1)} \otimes \Gamma_{(t}^{\otimes (n_{\rm a} + 1)}$. Let $\mathfrak{h}_{\rm sa} = \mathfrak{h}_{\rm s} \otimes \mathfrak{h}_{\rm a}$, $\mathcal{W}=B(\Gamma_s(L^2(\mathbb{R}_+)))$, $\mathcal{W}_{t]}=B(\Gamma_{t]})$ and $\mathcal{W}_{(t}=B(\Gamma_{(t})$.  A process $X_{\cdot}=\{X_t\}_{t \in \mathbb{R}_+}$ defined on $B(\mathfrak{h}_{\rm sa}) \otimes \mathcal{W}^{\otimes (n_a+1)}$ is said to be an {\em adapted process} if $X_t$ acts only on the factor $B(\mathfrak{h}_{\rm sa}) \otimes \mathcal{W}_{t]}^{\otimes (n_a+1)}$ and as the identity on the factor $\mathcal{W}_{(t}^{\otimes (n_a+1)}$.

The boson Fock space  for the individual quantum fields will be taken to be in the vacuum state $|\Omega\rangle_k$ $k=0,1,\ldots,n_{\rm a}$. For the probe there are the fundamental processes $A_{0,t}$, $A^{\dag}_{0,t}$ and $\Lambda_{0,t}$ while on the remaining $n_{\rm a}$ environmental quantum fields there are the fundamental processes $A_{k,t}$, $A^{\dag}_{k,t}$ and $\Lambda_{jk,t}$ for $j,k=1,\ldots,n_{\rm a}$. The forward differentials $dM_{k,t}=M_{k,t+dt} - M_{t}$, where $M$ can be any of the fundamental processes, satisfy the product \cite{KRP92,GJ09}, 
\begin{eqnarray*}
dA_{k,t}dA^{\dag}_{l,t} &=& \delta_{kl}dt,\; k=0,1,\ldots,n_{\rm a},\\
d\Lambda_{jk,t}d\Lambda_{lm,t} &=& \delta_{kl}d\Lambda_{jm,t}, \; j,k,l,m=1,\ldots,n_{\rm a},\\
d\Lambda_{0,t}dA^{\dag}_{0,t} &=& dA^{\dag}_{0,t}\\ 
d\Lambda_{jk,t}dA^{\dag}_{l,t} &=& \delta_{kl}dA^{\dag}_{j,t},\; j,k,l=1,\ldots,n_{\rm a},
\end{eqnarray*}
with all other products between $dA_{0,t}$, $dA^{\dag}_{0,t}$, $d\Lambda_{0,t}$,  $dA_{j,t}$, $dA^{\dag}_{k,t}$, and $d\Lambda_{lm,t}$, $j,k,l,m=1,\ldots,n_{\rm a}$, besides the above, vanishing.

The principal quantum system, auxiliary and quantum fields are coupled via the controlled QSDE introduced in \cite{BvH08}:
\begin{eqnarray}
dU_t &=& \left(-i\left(H_{\rm s}(t) +H_{\rm sa}(t) +\frac{1}{2}\sum_{k=0}^{n_a}L_k(t)^{\dag}L_k(t)\right) dt +  dA_{0,t}^{\dag} L_{0}(t)-L_0(t)^{\dag}S_0(t) dA_{0,t} \right. \notag \\
&&\quad + \sum_{k=1}^{n_a} dA_{k,t}^{\dag} L_{k}(t) - \sum_{j,k=1}^{n_a} L_j(t)^{\dag}S_{jk}(t) dA_{k,t} +(S_0(t)-I)d\Lambda_{0,t} \notag \\
&&\quad \left. + \sum_{j,k=1,\ldots,n_a} (S_{jk}(t) -\delta_{jk}) d\Lambda_{jk,t}\right)U_t,\; U_0 = I. \label{eq:controlled-QSDE}
\end{eqnarray}
In the above:
\begin{enumerate}
\item $H_{\rm s}(t)$, $L_0(t)$ and $S_0(t)$ are adapted principal and probe processes on $B(\mathfrak{h}_{\rm s})\otimes \mathcal{W}$, with an ampliation to $B(\mathfrak{h}_{\rm s})\otimes \mathcal{W}^{\otimes (n_{\rm a}+1)}$ that  acts as the identity operator on the auxiliary and all other environmental boson Fock spaces. For each $t$, $H_{\rm s}(t)$ is a Hamiltonian and is self-adjoint and $S_0(t)$ is a unitary operator. More specifications on these processes will be given below in \S  \ref{sec:controlled-flows}. 

\item For $j,k=1,\ldots,n_{\rm a}$ and each $t$, $H_{\rm sa}(t)$, $L_k(t)$ and $S_{jk}(t)$ are operators on $B(\mathfrak{h}_{\rm sa})$, with their ampliations acting as the identity operator on all boson Fock spaces. $H_{\rm sa}(t)$ is self-adjoint while the $S_{jk}(t)$'s form a unitary $n_{\rm a} \times n_{\rm a}$ matrix with operator entries. That is, letting
$$
\mathbf{S}(t) = \left[\begin{array}{cccc} S_{11}(t) & S_{12}(t) & \ldots & S_{1n_{\rm a}}(t) \\
S_{21}(t) & S_{22}(t) & \ldots & S_{2n_{\rm a}}(t)\\
\cdots & \vdots &  \ddots & \vdots \\
S_{n_{\rm a} 1}(t) & S_{n_{\rm a} 2}(t) & \ldots & S_{n_{\rm a} n_{\rm a}}(t) 
\end{array}\right], 
$$
then $\mathbf{S}(t)^{\dag}\mathbf{S}(t)=\mathbf{S}(t)\mathbf{S}(t)^{\dag}=I$.
Here $H_{\rm sa}(t)$ includes a coupling Hamiltonian term that couples the system and ancilla as well as the ancilla's own Hamiltonian, while the principal system's Hamiltonian can be included in $H_{\rm s}(t)$. 

\end{enumerate}

\section{Controlled quantum flow and output equation}
\label{sec:controlled-flows}
For any operator  $X$ on the principal quantum system and auxiliary, $j_t(X)=U_t^{\dag} X U_t$, where $U_t$ is a unitary solution to the controlled QSDE \eqref{eq:controlled-QSDE}, is called the {\em controlled quantum flow}. For the Markov case where there is no auxiliary but only coupling to a probe (in this case the field with index 0), the equation for the controlled quantum flow is given in \cite{BvH08}. The controlled quantum flow equation can be straightforwardly generalized to include  additional terms due to the auxiliary and environmental quantum fields that the principal is coupled to since $H_{\rm sa}(t)$, $L_k(t)$ and $S_{jk}(t)$ for $j,k=1,\ldots,n_{\rm a}$ are fixed operators in $B(\mathfrak{h}_{\rm sa})$ for each $t$ rather than (adapted) quantum processes, using the calculations that have been presented in \cite{GJ09}. The QSDE for the controlled quantum flow is:
\begin{eqnarray*}
dj_t(X) &=&  j_t\left(i[H_{\rm s}(t)+H_{\rm sa}(t),X] +\sum_{k=0}^{n_{\rm a}}\left(L_k(t)^{\dag}X L_{k}(t) -\frac{1}{2}\{L_k(t)^{\dag}L_{k}(t) ,X\} \right) \right)dt \\
&&\quad + dA_{0,t}^{\dag}j_t(S_{0}(t)^{\dag}[X,L_0(t)]) + j_t([L_0(t)^{\dag},X]S_{0}(t))dA_{0,t} \\
&&\quad  + j_t(S_0(t)XS_0(t)-X)d\Lambda_0(t)\\
&&\quad + \sum_{j,k=1,\ldots,n_{\rm a}} \left( dA_{j,t}^{\dag}j_t(S_{jk}(t)^{\dag}[X,L_k(t)]) + j_t([L_{j}(t)^{\dag},X] S_{jk}(t)) dA_{k,t}^{\dag}\right) \notag \\
&&\quad + \sum_{j=1}^{n_{\rm a}} \sum_{k,l=1}^{n_{\rm a}} j_t(S^{\dag}_{kj}(t) X S_{kl}(t) - X)d\Lambda_{lj,t}  
\end{eqnarray*}
for any operator $X \in B(\mathfrak{h}_{\rm sa})$. 

Throughout the paper we will consider a finite time interval $[0,T]$, with $0<T<\infty$. Following \cite{BvH08}, a very general class of measurements is performed on the principal quantum system through the probe, which includes adaptive measurements, based on a noise process $Z_{\cdot}$ on the probe of the form:
$$
Z_t =\int_{0}^t \Xi_{s} d\Lambda_{0,s} + \int_{0}^t \Upsilon_{s} dA^{\dag}_{0,s} + \int_{0}^t \Upsilon_{s}^{\dag} dA_{0,s},  
$$
such that $\Xi_{t}$, $\Upsilon_{t}$ are adapted processes that are affiliated to $\mathcal{Z}_t=\mathrm{vN}(Z_{0:t})$, the von Neumann algebra generated by $Z_{0:t}$ , $\Xi_t$ is self-adjoint and $\mathcal{Z}_t$ is a commutative algebra, for every $t \in [0,T]$. An example of such a noise process corresponding to adaptive homodyne measurements will be given in Example \ref{ex:adaptive-HD} below;  other examples can be found in \cite{BvH08}. 

The following assumption will be assumed to hold \cite{BvH08}.

\begin{assumption}
The processes $\Xi_{\cdot},\Upsilon_{\cdot},H_{\rm s}(\cdot),L_{0}(\cdot),S_{0}(\cdot)$ are bounded measurable processes, i.e.,
$\mathop{\sup}_{t \in [0,T]} \|\Xi_t\|< \infty$, $t \mapsto \Xi_t \psi$ is measurable for all $\psi \in \mathfrak{h}_{\rm sa} \otimes \Gamma_s(L^2(\mathbb{R}_+))$ and analogously for the other processes. 
\end{assumption}

The noise process $Z_{\cdot}$ of the probe becomes transformed to the output process $Y_t=j_t(Z_t)$ that is given by the QSDE \cite{BvH08}:
\begin{eqnarray}
dY_t &=&  dj_t(Z_{t}) \notag\\
&=& \Xi_{t} d\Lambda_{0,t} + j_t(S_{0}(t)^{\dag} (\Upsilon_{t} +\Xi_{t} L_{0}(t))dA_{0,t}^{\dag} +j_t((\Upsilon_{t}^{\dag} + \Xi_{t} L_{0}(t)^{\dag} )S_0(t))dA_{0,t} \notag \\
&&\quad + j_t(\Xi_{t}L_{0}(t)^{\dag}L_{0}(t) + \Upsilon_{t}^{\dag} L_{0}(t) + \Upsilon_{t}L_{0}(t)^{\dag})dt \label{eq:Y-QSDE}
\end{eqnarray}

The process $Y_{\cdot}$ satisfies the {\em self-non-demolition property}, $[Y_t,Y_s]=0$ for all $s,t \geq 0$. Note that this property also holds with the addition of the auxiliary and the environmental noise channels, as it  follows from the standard property that the solution $U_{\cdot}$ of the controlled QSDE is, like to usual Hudson-Parthasarathy QSDE, a unitary co-cycle with respect to the second quantization of the right shift operator on the 1-particle space of the probe boson Fock space; for a discussion, see \cite{KRP92}\cite[Chapter 2]{NY17}. By the same property the {\em non-demolition property} $[j_t(X),Y_s]=0$ for all $0 \leq s \leq t$ also holds for any operator $X$ of the principal system and auxiliary. 

For any operator $O$ that is affiliated to a commutative von Neumann algebra $\mathcal{C}$, 
we use $\iota$ to denote a map that takes $O$ and sends it to a classical random variable $\iota(O)$ that belongs to an algebra of essentially bounded random variables which is isomorphic to $\mathcal{C}$ via the Spectral Theorem \cite[Theorem 3.3]{BvHJ07}. For the remainder of this paper, for each fixed time $t$, $\mathcal{C}$ is implicitly taken to be the commutative von Neumann algebra generated by $\mathcal{Y}_t=\mathrm{vN}(Y_{0:t})$ and $\iota$ is a map that sends any operator $O$ affiliated to $\mathcal{Y}_t$ to a classical random variable $\iota(O)$ via the Spectral Theorem. We conclude this section with the following remark regarding further assumptions on the processes coupling the system to the probe. 

\begin{remark}
The processes $H_{\rm s}(\cdot)$, $L_{\rm 0}(\cdot)$, $S_{\rm 0}(\cdot)$ besides being adapted are taken to be processes that are affiliated to $B(\mathfrak{h}_{\rm s}) \otimes \mathcal{Z}_t$ for each $t$.
\end{remark}

\begin{example}
\label{ex:adaptive-HD} In adaptive homodyne detection the angle of the probe quantum field to be measured at time $t$ can be adapted by adjusting the relative phase between the  quantum field being measured and the local oscillator for the probe according to the measurement record up time before $t$. The noise process is modeled as:
$$
Z_t = \int_{0}^t \left(e^{i \theta(Z_{0:v})}dA^{\dag}_{0,v} + e^{-i \theta(Z_{0:v})}dA_{0,v}\right),
$$
where $\theta$ is a real essentially bounded Borel-measurable function. The output process $Y_{\cdot}$
follows from \eqref{eq:Y-QSDE} as:
\begin{align*}
Y_t &= \int_{0}^t \left(\vphantom{e^{-i\theta(Y_{0:t})}} (e^{-i\theta(Y_{0:v})}j_t(S_0(v)^{\dag}L_0(v)) + e^{i\theta(Y_{0:v})}j_t(S_0(v)L_0(v)^{\dag}))dv \right. \notag\\
&\qquad  \left. + e^{i \theta(Y_{0:v})}dA^{\dag}_{0,v} + e^{-i \theta(Y_{0:v})}dA_{0,v}\right).
\end{align*}

\end{example}
\section{Quantum filtering equations}
\label{sec:controlled-quantum-fitering}

Let the principal system and auxiliary be prepared in the initial state (density operator) $\rho_{\rm sa}$ and all the quantum fields indexed by $j=0,1,\ldots,n_{\rm a}$ be prepared in the vacuum state $|\Omega\rangle_j$. Define the state (in the operator algebra sense) $\mathbb{P}(X) = \mathrm{tr}(\rho_{\rm sa}|\Omega \rangle \langle \Omega| X)$, where $|\Omega \rangle = |\Omega \rangle_0 \otimes  \cdots \otimes |\Omega \rangle_{n_{\rm a}}$ for any operator  $X \in B(\mathfrak{h}_{\rm sa})\otimes \mathcal{W}^{\otimes (n_{\rm a}+1)}$. 

By the demolition property, the quantum conditional expectation of $j_t(X)$ given $\mathcal{Y}_{t}$ with respect to the state $\mathbb{P}$ exists and is well-defined \cite{BvHJ07}. Let this quantum conditional expectation be $\pi_t(X)= \mathbb{P}(j_t(X) \mid \mathcal{Y}_{t}$). It satisfies a QSDE that was derived in \cite{BvH08} for a principal quantum system that is only coupled to a probe. However, as mentioned above, since (i) the environmental fundamental processes $A_{k,t}$, $A_{k,t}^{\dag}$ and $\Lambda_{jk,t}$ for $j,k=1,\ldots,n_{\rm a}$ are  coupled to operators $L_k(t)$ and $S_{jk}(t)$ on $B(\mathfrak{h}_{\rm sa})$ for each $t$ rather than quantum stochastic processes (i.e., unlike $H_{\rm s}(t)$, $L_0(t)$, and  $S_0(t)$) and (ii) measurement is performed only on the probe, the derivation of the quantum filtering equation in \cite{BvH08} carries over {\em mutatis mutandis} with only the addition of some obvious terms. Thus we have following.

\begin{proposition}
\label{prop:innovations} (\cite[Proposition 4.1]{BvH08})
Define the innovations process
$$
I_t =Y_t -\int_{0}^t \pi_v(\Xi_v L_{0}(v)^{\dag}L_{0}(v) + \Upsilon_{v}^{\dag}L_{0}(v)+ \Upsilon_{v}L_{0}(v)^{\dag})dv.
$$
Then $I_t$ is a $\mathcal{Y}_t$-martingale. That is, $\mathbb{P}(I_t \mid \mathcal{Y}_s)=I_s$.
\end{proposition}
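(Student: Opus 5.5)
We use the standard innovations argument: show that $\mathbb{P}(I_t - I_s \mid \mathcal{Y}_s) = 0$ for $0 \le s \le t \le T$. Since $I_s$ is affiliated to $\mathcal{Y}_s$, this gives $\mathbb{P}(I_t \mid \mathcal{Y}_s) = I_s$. By the defining property of the quantum conditional expectation (which exists by the non-demolition property), it suffices to show that $\mathbb{P}((I_t - I_s)K) = 0$ for every bounded $K \in \mathcal{Y}_s$. Everything then happens inside the commutative algebra $\mathcal{Y}_t$, where, via $\iota$, we may argue as for classical random variables.

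\textbf{Splitting the increment.} Write $G_v = \Xi_v L_0(v)^{\dag}L_0(v) + \Upsilon_v^{\dag}L_0(v) + \Upsilon_v L_0(v)^{\dag}$. By \eqref{eq:Y-QSDE},
\[
Y_t - Y_s = \int_s^t j_v(G_v)\,dv + M_{s,t},
\]
where $M_{s,t}$ collects the integrals against $d\Lambda_{0,v}$, $dA^{\dag}_{0,v}$ and $dA_{0,v}$. The plan has two steps.

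\emph{Martingale part.} We show $\mathbb{P}(M_{s,t}K) = 0$. Since $K \in \mathcal{Y}_s$ is an adapted operator at time $s$ and the integrands are adapted, the factorization $\Gamma_{s]}\otimes\Gamma_{(s}$ lets us move $K$ inside the integrals. The vacuum state then annihilates the $dA_{0}$ and $d\Lambda_{0}$ integrals when acting on the right of $|\Omega\rangle$, and the $dA_0^{\dag}$ integrals when acting on the left. This uses the boundedness in the Assumption to justify the quantum It\^o integrals, and the fact that, although the $(n_{\rm a}+1)$ fields are present, only the field $0$ terms appear in $dY$.

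\emph{Drift part.} We show
\[
\mathbb{P}\Big(\int_s^t j_v(G_v)\,dv\,K\Big) = \mathbb{P}\Big(\int_s^t \pi_v(G_v)\,dv\,K\Big).
\]
Here we use Fubini and, for each $v \ge s$, the tower property
\[
\mathbb{P}(j_v(G_v)K) = \mathbb{P}\big(\mathbb{P}(j_v(G_v)\mid\mathcal{Y}_v)K\big),
\]
which is valid because $K \in \mathcal{Y}_s \subseteq \mathcal{Y}_v$ and $j_v(G_v)$ commutes with $\mathcal{Y}_v$ by non-demolition. Subtracting the two identities gives $\mathbb{P}((I_t-I_s)K) = 0$.

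\textbf{Main obstacle.} The delicate point is that $G_v$ is not a fixed system operator. By the Remark, $\Xi_v$ and $\Upsilon_v$ are affiliated to $\mathcal{Z}_v$, and $L_0(v)$ to $B(\mathfrak{h}_{\rm s})\otimes\mathcal{Z}_v$. So $\pi_v(G_v)$ must be interpreted as $\mathbb{P}(j_v(G_v)\mid\mathcal{Y}_v)$, and the non-demolition property must be checked for such $\mathcal{Z}$-dependent integrands. For this we follow \cite{BvH08}: $j_v$ maps $\mathcal{Z}_v$ into $\mathcal{Y}_v$, since $j_v(Z_{0:v}) = Y_{0:v}$ by the co-cycle property. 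Hence $j_v(\Xi_v) = \Xi_v$ evaluated along $Y$, as in Example~\ref{ex:adaptive-HD}, and these factors pull out of the conditional expectation. The auxiliary and environment channels $k \ge 1$ affect only the form of $j_v$, not these arguments, so the proof of \cite[Proposition 4.1]{BvH08} carries over unchanged.
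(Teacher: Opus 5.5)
Your proposal is correct and follows essentially the same route as the paper, which simply defers to the standard innovations argument of \cite[Proposition 4.1]{BvH08}: the quantum stochastic integrals have zero vacuum expectation against any $K \in \mathcal{Y}_s$, the drift is handled by the tower property, and the extra channels $k \geq 1$ drop out of $dY_t$ by the It\^o table. One small tightening: for $j_v(G_v) \in \mathcal{Y}_v'$ what you need is the equality $j_v(\mathcal{Z}_v) = \mathcal{Y}_v$ (from $j_v(Z_\tau) = Y_\tau$ for $\tau \leq v$ and $j_v$ being a unitary conjugation) together with $G_v$ commuting with the commutative algebra $\mathcal{Z}_v$, and the pull-out of $j_v(\Xi_v)$, $j_v(\Upsilon_v)$ from the conditional expectation is not needed for this proposition.
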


Recall the notation $\iota$ from Section \ref{sec:controlled-flows}. Let $\tilde{\Xi}_{\cdot}$ and $\tilde{\Upsilon}_{\cdot}$ be the classical scalar stochastic processes  $\tilde{\Xi}_{t} = \iota(j_t(\Xi_t))$ and $\tilde{\Upsilon}_{t} = \iota(j_t(\Xi_t))$. We have in the next proposition the quantum filtering equation when a measurement of the  amplitude quadrature is performed on $Y_{\cdot}$ (by continuous homodyne detection), corresponding to taking $\Xi_{t} =0$ $\forall t \geq 0$. 
\begin{proposition}
\label{prop:filtering-diffusion} (\cite[Proposition 4.2]{BvH08})
Suppose that $\Xi_t=0$ and $\Upsilon_t$ has a bounded inverse  $\forall t \in [0,T]$. Then $\pi_{\cdot}$ satisfies w.r.t the semimartingale observations $Y_{\cdot}$ the QSDE:
\begin{align*}
d\pi_t(X) &= \pi_t\left(i[H_{\rm s}(t)+H_{\rm sa}(t),X]  +\sum_{k=0}^{n_{\rm a}} \left(L_{k}(t)^{\dag}X L_k(t) -\frac{1}{2}\{L_k(t)^{\dag}L_{k}(t), X\}\right) \right)dt \\
&\quad +\left( \pi_{t}(\Upsilon_t^{-1}XL_{0}(t) + \Upsilon_t^{-1\dag}L_{0}(t)^{\dag}X)-\pi_t(\Upsilon_t^{-1}L_{0} (t)+ \Upsilon_t^{-1\dag}L_{0}(t)^{\dag})\pi_t(X)\right)dI_t,
\end{align*}
for all $X \in B(\mathfrak{h}_{\rm sa})$ with initial condition $\pi_0(X)=X$, where $I_{\cdot}$ is the innovations process in Proposition \ref{prop:innovations}.
\end{proposition}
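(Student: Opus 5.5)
The plan is to derive the filter by the reference-probability (characteristic-function) method of \cite{BvH08}, checking that the auxiliary and the environmental channels $k=1,\ldots,n_{\rm a}$ only add generator terms. Fix $X \in B(\mathfrak{h}_{\rm sa})$. Since $\pi_t(X)$ is affiliated to $\mathcal{Y}_t$ and $\mathcal{Y}_t$ is commutative, it suffices to find $\mathcal{Y}_t$-adapted processes $\alpha_t,\beta_t$ with
$$\pi_t(X)=\pi_0(X)+\int_0^t \alpha_v\,dv+\int_0^t \beta_v\,dI_v,$$
and to identify them. I will use the exponential test processes
$$C_t=\exp\Bigl(\int_0^t g_v\,dY_v-\tfrac{1}{2}\int_0^t (\ldots)\,dv\Bigr), \qquad dC_t=g_t C_t\,dY_t,$$
with $g$ bounded deterministic, or $\mathcal{Y}$-adapted; their linear span is dense in $L^1(\mathcal{Y}_t)$. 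The filter is then characterized by $\mathbb{P}(j_t(X)C_t)=\mathbb{P}(\pi_t(X)C_t)$ for all such $C$.

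First I compute $d(j_t(X)C_t)$ with the quantum It\^o rules, using the flow QSDE of \S\ref{sec:controlled-flows} and \eqref{eq:Y-QSDE} with $\Xi=0$, so that $dY_t$ contains only $dA_0,dA_0^\dagger,dt$ terms. In the vacuum expectation all $dA$, $dA^\dagger$ and $d\Lambda$ terms vanish. This leaves the Lindblad generator term (including $H_{\rm sa}$ and the $L_k$, $k\ge1$) and the probe terms. The key observation is that the It\^o correction $dj_t(X)\,dY_t$ involves only the probe products $dA_0\,dA_0^\dagger$. The environmental differentials $dA_k$, $k\geq1$, have zero products with $dA_0^\dagger$. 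Hence the cross term is
$$j_t\bigl(([L_0^\dagger,X]S_0)\,S_0^\dagger\Upsilon_t + \ldots\bigr)g_tC_t\,dt,$$
exactly as in \cite{BvH08}. Also $H_{\rm s},L_0,S_0,\Upsilon$ are affiliated with $B(\mathfrak{h}_{\rm s})\otimes\mathcal{Z}_t$, so they commute appropriately with $C_t$ by non-demolition.

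Second, I compute $d(\pi_t(X)C_t)$ by classical It\^o calculus, with $dY=dI+\pi(\Upsilon^\dagger L_0+\Upsilon L_0^\dagger)dt$ and $dI\,dI=\iota(\Upsilon^\dagger\Upsilon)\ldots dt$. I then equate expectations for all $g$, using Proposition~\ref{prop:innovations} to kill the martingale parts. Matching the $g$-independent terms gives $\alpha_t=\pi_t(\mathcal{L}_t(X))$. Matching the terms linear in $g$ gives the relation
$$\beta_t\,\pi_t(\Upsilon_t^\dagger\Upsilon_t)=\pi_t(\Upsilon_t^\dagger X L_0+L_0^\dagger X\Upsilon_t)-\pi_t(X)\,\pi_t(\Upsilon_t^\dagger L_0+\Upsilon_t L_0^\dagger),$$
in the appropriate adapted sense. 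Since $\Upsilon_t$ is affiliated to $\mathcal{Z}_t$, its image $j_t(\Upsilon_t)$ is affiliated to $\mathcal{Y}_t$ and factors out of $\pi_t$.

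The main obstacle is this last step: dividing by $\Upsilon$ and bringing the result to the stated form with $\Upsilon_t^{-1}$ inside $\pi_t$. This is where the bounded-inverse hypothesis enters. Because $j_t(\Upsilon_t)$ lies in the commutative algebra $\mathcal{Y}_t$ and commutes with $j_t(X)$, I can write $\pi_t(\Upsilon_t^{-1}XL_0)=\iota(j_t(\Upsilon_t))^{-1}\pi_t(XL_0)$. The commutativity with $j_t(X)$ comes from $[j_t(X),Y_s]=0$. The remaining work is routine bookkeeping to confirm that the coefficient matches the statement. The auxiliary terms never interact with this step.
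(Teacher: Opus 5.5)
\textbf{There is a gap: your derivation assumes the It\^o form of $\pi_t(X)$ instead of proving it.}

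Your check on the extra channels is exactly the paper's justification for importing \cite[Proposition 4.2]{BvH08} \emph{mutatis mutandis}, and the paper offers no argument beyond it. That check says $H_{\rm sa}(t)$, $L_k(t)$, $S_{jk}(t)$ are fixed operators on $\mathfrak{h}_{\rm sa}$, and that $dA_{k,t}, dA^{\dag}_{k,t}, d\Lambda_{jk,t}$ for $k\geq 1$ have vanishing It\^o products with $dY_t$ and drop out of vacuum expectations. So these channels only add Lindblad terms to the drift.

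The gap is in the self-contained rederivation you build around that check. Its first step assumes $\pi_t(X)=\pi_0(X)+\int_0^t\alpha_v\,dv+\int_0^t\beta_v\,dI_v$ for some $\mathcal{Y}_t$-adapted $\alpha,\beta$. The only justification given is that $\pi_t(X)$ is affiliated to the commutative algebra $\mathcal{Y}_t$. That makes $\pi_t(X)$ a classical adapted process, not an It\^o process driven by $I$. The characteristic-function matching can \emph{identify} $\alpha,\beta$ once such a representation is known, but it cannot produce one. A verification argument (solve the proposed SDE and test it against the $C_t$) does not close either: the equation is nonlinear in $\pi$, and $I$ is itself defined through the true filter. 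Your computation with $g$ vanishing on $(s,t]$ does show that $M_t=\pi_t(X)-\pi_0(X)-\int_0^t\pi_v(\cdots)\,dv$ is a $\mathcal{Y}_t$-martingale. So the precise missing ingredient is a representation $M_t=\int_0^t\beta_v\,dI_v$.

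That is the real obstacle. The step you flag, dividing by $|\tilde\Upsilon_t|^2$, is one line, since $|\tilde\Upsilon_t|^{-2}\overline{\tilde\Upsilon_t}=\tilde\Upsilon_t^{-1}$. By L\'evy's characterization, $W_t=\int_0^t|\tilde\Upsilon_v|^{-1}dI_v$ is a $\mathcal{Y}_t$-Wiener process, and this is where the bounded inverse of $\Upsilon_t$ does substantive work. Under feedback, however, $\tilde\Upsilon$ is itself a functional of $Y_{0:t}$. The filtration of $W$ therefore need not equal $\mathcal{Y}_t$, and you need a Fujisaki--Kallianpur--Kunita type argument rather than the plain Brownian representation theorem.

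The same feature undercuts your totality claim for the test exponentials. Since $d[Y]_t=|\tilde\Upsilon_t|^2dt$ is random, $C_t$ is no longer a deterministic multiple of $\exp\int_0^t g_v\,dY_v$, so the standard lemma does not apply as stated.

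There are two clean fixes. Either cite \cite{BvH08} for the structure of the filter and keep only your extra-channel check, as the paper does. Or use the reference probability method proper, which your label ``reference-probability (characteristic-function)'' conflates with the characteristic-function method. In that method the unnormalized filter $\sigma_t$ comes out directly as a linear SDE. Then $\pi_t(X)=\sigma_t(X)/\sigma_t(I)$ inherits its It\^o form by classical It\^o calculus, with no representation theorem needed.
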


In the case that $\Xi_t$ has a bounded inverse then a photon counting measurement is performed on $Y_{\cdot}$ and the quantum filtering equation is given by the next proposition.
\begin{proposition}
\label{prop:filtering-counting} (\cite[Proposition 4.3]{BvH08}) Suppose that $\Xi_t $ has a bounded inverse $\forall t \in [0,T]$. Then $\pi_{\cdot}$ satisfies w.r.t the semimartingale observations $Y_{\cdot}$ the QSDE:
\begin{align*}
d\pi_t(X) &= \pi_t\left(i[H_{\rm 0}(t)+H_{\rm sa}(t),X]  +\sum_{k=0}^{n_{\rm a}} \left(L_{k}(t)^{\dag}X L_k(t) -\frac{1}{2}\{L_k(t)^{\dag}L_{k}(t), X\}\right)\right)dt \\
&\quad + \left(\frac{\pi_{t-}((\Upsilon_{t-}+\Xi_{t-}L_{0}(t-))^{\dag}X(\Upsilon_{t-}+\Xi_{t-}L_{0}(t-))}{\pi_{t-}((\Upsilon_{t-}+\Xi_{t-}L_{0}(t-))^{\dag}(\Upsilon_{t-}+\Xi_{t-}L_{0}(t-))}-\pi_{t-}(X)\right)\tilde{\Xi}_{t-}^{-1}dI_t,
\end{align*}
for all $X$ with initial condition $\pi_0(X)=X$, where $I_{\cdot}$ is the innovations process in Proposition \ref{prop:innovations}.  
\end{proposition}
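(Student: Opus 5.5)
The plan is to adapt the reference-probability (Kallianpur--Striebel) derivation of \cite[Proposition 4.3]{BvH08} to the present model. The auxiliary and the environmental fields $k=1,\ldots,n_{\rm a}$ enter only through the fixed operators $H_{\rm sa}(t)$, $L_k(t)$, $S_{jk}(t)$. They are not measured and commute with $\mathcal{Z}_t$, so they should contribute only Lindblad drift terms.

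First I will fix the structure of the measured channel. Since $\Xi_t$ has a bounded inverse, write $K_t=\Upsilon_t+\Xi_t L_0(t)$. By \eqref{eq:Y-QSDE}, the quadratic variation gives $dY_t\,dY_t=\Xi_t\,dY_t$ up to ampliation by $j_t$. Hence $\tilde{\Xi}_{t}^{-1}dY_t$ is a counting-type increment whose compensator is $\tilde{\Xi}_t^{-1}\pi_t(K_t^{\dag}K_t)\,dt$. One checks this by expanding $K_t^{\dag}K_t=\Xi_t^2L_0^{\dag}L_0+\Upsilon_t^{\dag}\Xi_tL_0+\Xi_tL_0^{\dag}\Upsilon_t+\Upsilon_t^{\dag}\Upsilon_t$. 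One then uses that $\Xi_t,\Upsilon_t$ are affiliated to the commutative algebra $\mathcal{Z}_t$ and commute with system operators, so they can be pulled out of $\pi_t$ after applying $j_t$. Together with Proposition \ref{prop:innovations}, this identifies the innovations $I_t$ as the natural martingale.

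Second, I will derive the filter by the martingale representation/characteristic method. I postulate $d\pi_t(X)=F_t\,dt+G_t\,dI_t$ with $\mathcal{Y}_t$-adapted $F_t,G_t$. The defining property of conditional expectation then requires $\mathbb{P}\big((j_t(X)-\pi_t(X))C_t\big)=0$ for all $C_t$ in a total set of $\mathcal{Y}_t$, namely exponentials $dC_t=g_tC_t\,dY_t$. Differentiating and using the quantum It\^o rules together with the controlled flow equation for $dj_t(X)$ from \S\ref{sec:controlled-flows} gives two conditions. The $dt$ part yields $F_t=\pi_t(\mathcal{L}_t(X))$, where $\mathcal{L}_t$ is the full Lindbladian. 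The environmental channels $k\geq1$ contribute here through $L_k^{\dag}XL_k-\tfrac12\{L_k^{\dag}L_k,X\}$: their field terms have zero vacuum expectation and have zero It\^o product with $dY_t$, because the products of index-$0$ and index-$k$ differentials vanish. The $dY_t$ part yields the gain equation $\pi_t(K_t^{\dag}XK_t)=\pi_t(X)\pi_t(K_t^{\dag}K_t)+G_t\tilde{\Xi}_t^{-1}\pi_t(K_t^{\dag}K_t)\tilde{\Xi}_t$, up to careful bookkeeping of the factors of $\Xi$. Solving for $G_t$ gives the stated ratio minus $\pi_{t-}(X)$, times $\tilde{\Xi}_{t-}^{-1}$. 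Left limits are needed because the jump process requires predictable integrands.

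The main obstacle is the gain computation. The It\^o product $dj_t(X)\,dY_t$ involves the $d\Lambda_0$ and $dA_0$ terms of the flow together with the unitary $S_0(t)$. The contributions must recombine into $j_t(K_t^{\dag}XK_t)$ using $S_0^{\dag}S_0=I$ and the commutation of $\Xi_t,\Upsilon_t$ with system operators. I would first verify this combination explicitly in the quantum It\^o table. I would also justify that the denominator is nonzero almost surely wherever a jump occurs, so the ratio is defined $\mathbb{P}$-a.s., following the argument in \cite{BvH08}.
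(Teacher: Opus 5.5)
Your overall route is the paper's. The paper gives no independent argument: it invokes \cite[Proposition 4.3]{BvH08} \emph{mutatis mutandis}. The only new point is that the channels $k\geq 1$ couple through fixed operators in $B(\mathfrak{h}_{\rm sa})$, have vanishing It\^o products with the index-$0$ differentials, and have zero vacuum expectation, so they only add Lindblad drift terms; this is exactly your observation. The problem is the step you yourself flag as the main obstacle: the explicit formulas you write there are wrong, and taken literally they do not produce the stated gain.

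\textbf{Quadratic variation.} In \eqref{eq:Y-QSDE} the $dA_{0,t}\,dA_{0,t}^{\dag}$ product contributes $j_t(K_t^{\dag}S_0(t)S_0(t)^{\dag}K_t)\,dt=j_t(K_t^{\dag}K_t)\,dt$. This exceeds $\tilde{\Xi}_t$ times the drift of $Y$ by $\tilde{\Upsilon}_t^{\dag}\tilde{\Upsilon}_t$, so $dY_t\,dY_t=\tilde{\Xi}_t\,dY_t+\tilde{\Upsilon}_t^{\dag}\tilde{\Upsilon}_t\,dt$, not $\tilde{\Xi}_t\,dY_t$. Consequently $\tilde{\Xi}_t^{-1}dY_t$ is not a counting increment. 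The counting process is $dN_t=\tilde{\Xi}_t^{-1}dY_t+\tilde{\Xi}_t^{-2}\tilde{\Upsilon}_t^{\dag}\tilde{\Upsilon}_t\,dt$. Its compensator is $\tilde{\Xi}_t^{-2}\pi_t(K_t^{\dag}K_t)\,dt$, not $\tilde{\Xi}_t^{-1}\pi_t(K_t^{\dag}K_t)\,dt$, and $\tilde{\Xi}_t^{-1}dI_t=dN_t-\tilde{\Xi}_t^{-2}\pi_t(K_t^{\dag}K_t)\,dt$. If you carried your quadratic variation through, you would get $\pi_t(K_t^{\dag}K_t)-\tilde{\Upsilon}_t^{\dag}\tilde{\Upsilon}_t$ in the denominator.

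\textbf{Gain equation.} Everything in your gain equation is a classical scalar, so $G_t\tilde{\Xi}_t^{-1}\pi_t(K_t^{\dag}K_t)\tilde{\Xi}_t=G_t\pi_t(K_t^{\dag}K_t)$. It therefore yields the ratio minus $\pi_t(X)$ \emph{without} the factor $\tilde{\Xi}_t^{-1}$, which contradicts your stated conclusion. The correct bookkeeping is as follows.

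Write $E_t=\Xi_tL_0(t)^{\dag}L_0(t)+\Upsilon_t^{\dag}L_0(t)+\Upsilon_tL_0(t)^{\dag}$ for the drift coefficient of $Y$. In $d\,\mathbb{P}(j_t(X)C_t)$ the only It\^o correction surviving in the vacuum is the $dA_{0,t}\,dA^{\dag}_{0,t}$ term; the $d\Lambda_{0,t}$ part of the flow contributes nothing. So after conditioning, the coefficient of $g_tC_t$ is $\pi_t(XE_t+[L_0(t)^{\dag},X]K_t)=\tilde{\Xi}_t^{-1}\pi_t(K_t^{\dag}XK_t)-\tilde{\Xi}_t^{-1}\tilde{\Upsilon}_t^{\dag}\tilde{\Upsilon}_t\pi_t(X)$.

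On the filter side the coefficient is $\pi_t(X)\pi_t(E_t)+G_t\pi_t(K_t^{\dag}K_t)$, because the compensator of $dY_t\,dY_t$ is $\pi_t(K_t^{\dag}K_t)\,dt$. Now use $\pi_t(K_t^{\dag}K_t)=\tilde{\Upsilon}_t^{\dag}\tilde{\Upsilon}_t+\tilde{\Xi}_t\pi_t(E_t)$. The $\tilde{\Upsilon}$ terms cancel, leaving $\tilde{\Xi}_tG_t\pi_t(K_t^{\dag}K_t)=\pi_t(K_t^{\dag}XK_t)-\pi_t(X)\pi_t(K_t^{\dag}K_t)$, which is precisely the stated gain.

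The cancellation of the $\tilde{\Upsilon}_t^{\dag}\tilde{\Upsilon}_t$ terms between the two sides, and the resulting factor $\tilde{\Xi}_t^{-1}$, is exactly what your sketch gets wrong.
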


\begin{remark}
\label{rem:classical-notation} For notational simplicity and by a slight abuse of notation, we will use $Y_t$, $I_t$ and $\pi_t(X)$ to also denote their corresponding classical random variables $\iota(Y_t)$, $\iota(I_t)$ and $\iota(\pi_t(X))$. Whether the notation is referring to the original self-commuting operator-valued process or to the corresponding classical stochastic process (by the Spectral Theorem) can be inferred from the context. 
\end{remark}

\section{Main results}
\label{sec:main-results}
This section will present the main results that extend the non-Markovian SDEs obtained in \cite{Nurd25} to the case of non-Markovian quantum systems that undergo  continous measurement feedback by generalized continuous homodyne detection or photon counting.

There exists a stochastic density operator $\varrho_{\mathrm{sa},t}$ in $B(\mathfrak{h}_{\rm sa})$ such that $\pi_t(X)=\mathrm{tr}(\varrho_{\mathrm{sa},t} X)$ for all $t$. This density operator satisfies an operator-valued SDE referred to as a stochastic master equation (SME) \cite{BvHJ07,WM10} and the reduced stochastic state of the principal system only, denoted by $\varrho_{\mathrm{s},t}$ is obtained by taking the partial trace of $\varrho_{\mathrm{sa},t}$ over the ancilla Hilbert space $\mathfrak{h}_{\rm sa}$,
\begin{equation}
\varrho_{\mathrm{s},t} = \mathrm{Tr}_{\mathfrak{h}_{\rm sa}}(\varrho_{\mathrm{sa},t}). \label{eq:reduce-stochastic-principal}
\end{equation}

For the case of a quantum Markov model, \cite{BvH08} derives the SME and expresses it in terms of the matrix-valued processes $\tilde{L}_{0}(\cdot)$ and $\tilde{H}_{0}(\cdot)$, where $\tilde{L}_{0,\cdot}$ and $\tilde{H}_{\mathrm{s},\cdot}$ are $\mathbb{C}^{\mathrm{dim}(\mathcal{H}_{\mathrm{s}}) \times \mathrm{dim}(\mathcal{H}_0)}$-valued stochastic processes whose elements are obtained by applying $\iota$ elementwise to the matrix elements of $j_t(L_{0}(t))$ and $j_t(H_{\mathrm{s}}(t))$ for all $t$ \cite[Remark 4.1]{BvH08}. However, for the purposes of this paper, more structure is required and we introduce the following assumption:

\begin{assumption}
The $\mathcal{Z}_t$-adapted processes $H_{\rm s}(\cdot)$ and $L_{0}(\cdot)$ take the form $\sum_{j=1}^m K_j g_j(Z_{0:t})$ for each $t$, where $m \geq 1$ is some integer, $K_j$ is an operator in $B(\mathfrak{h}_{\rm s})$ and $g_j$ is an essentially bounded Borel measurable function for each $j$.  
\end{assumption}

Note that in the above assumption $K_j$ is an operator on the principal system (acts trivially on the ancilla). Then we have the following lemma. 

\begin{lemma}
Let $H_{\rm s}(t)=\sum_{k=1}^{m} H_{\mathrm{s},k}(t) f_k(Z_{0:t})$ and $L_{0}(t)= \sum_{k=1}^{n} L_{0,k}(t) g_k(Z_{0:t})$ for some operators $H_{\mathrm{s},k}(t)=H_{\mathrm{s},k}(t)^{\dag},L_{0,l}(t) \in B(\mathfrak{h}_{\rm s})$ and $f_k,g_l$ are some essentially bounded real and complex Borel measureable functionals  for $k=1,\ldots,m$, $l=1,\ldots,n$, respectively, and all $t$. Then the following identities hold:
\begin{eqnarray*}
\lefteqn{\pi_t([H_{\rm s}(t),X])}\\ &=& \sum_{k=1}^m f_k(Y_{0:t}) \mathrm{tr}([\varrho_{\mathrm{sa},t},H_{\mathrm{s},k}(t)]X)\\
\lefteqn{\pi_t\left(L_{0}(t)^{\dag}XL_0(t)-\frac{1}{2}\{L_0(t)^{\dag}L_0(t),X\} \right)}\\ 
&=&  \sum_{j,k=1}^n g_j(Y_{0:t})^{\dag} g_k(Y_{0:t}) \mathrm{Tr}\left(\left(\vphantom{\frac{1}{2}} L_{0,k}(t) \varrho_t L_{0,j}(t)^{\dag} -\frac{1}{2}\left\{\varrho_{\mathrm{sa},t}, L_{0,j}(t)^{\dag} L_{0,k}(t)  \right\} \right) X \right)\\
\lefteqn{\pi_t(\Upsilon_t^{-1}XL_{0}(t) + \Upsilon_t^{-1 \dag}L_{0}(t)^{\dag}X)}\\ &=& \sum_{k=1}^n \mathrm{Tr}\left(\left( g_k(Y_{0;t})\tilde{\Upsilon}_t^{-1} L_{0,k}(t) \varrho_{\mathrm{sa},t}  + \varrho_{\mathrm{sa},t} g_k(Y_{0;t})^{\dag} \tilde{\Upsilon}^{-1 \dag} L_{0,k}(t)^{\dag} \right)X\right),\\
\lefteqn{\pi_t\left((\Upsilon_{t} + \Xi_{t}L_{0}(t))^{\dag}X(\Upsilon_{t} + \Xi_{t}L_{0}(t))\right)}\\ 
&=& \mathrm{Tr}\left( \left(\tilde{\Upsilon}_{t} + \tilde{\Xi}_{t} \sum_{k=1}^m g_k(Y_{0:t})L_{0,k}(t)\right) \varrho_{\mathrm{sa},t}  \left(\tilde{\Upsilon}_{t} + \tilde{\Xi}_{t} \sum_{k=1}^m g_k(Y_{0:t})L_{0,k}(t)\right)^{\dag}X \right)
\end{eqnarray*}
\end{lemma}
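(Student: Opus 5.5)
The plan is to show that every classical functional of the measurement record factors out of the conditional expectation, and then to rewrite what remains with the density operator $\varrho_{\mathrm{sa},t}$. The key step is a ``pull-out'' property. For any essentially bounded Borel functional $h$ we have $j_t(h(Z_{0:t})) = h(Y_{0:t})$. This holds because $j_t$ is a $*$-homomorphism; moreover, since $U_t$ is a unitary cocycle, $U_t^{\dag} Z_s U_t = U_s^{\dag} Z_s U_s = Y_s$ for $s \le t$. Together these give $j_t(g(Z_{0:t})) = g(j_t(Z)_{0:t}) = g(Y_{0:t})$ by the functional calculus on the commutative algebra $\mathcal{Z}_t$. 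I would first check this on polynomials in finitely many $Z_{s_i}$ and then pass to bounded Borel functionals by a monotone class/strong-limit argument. As a consequence $h(Y_{0:t})$ is affiliated to $\mathcal{Y}_t$. The module property of the quantum conditional expectation \cite{BvHJ07} then gives $\mathbb{P}(A\, j_t(X') B \mid \mathcal{Y}_t) = A\,\mathbb{P}(j_t(X')\mid \mathcal{Y}_t) B$ for $A,B$ in $\mathcal{Y}_t$ (these commute with $j_t(X')$ by non-demolition). In the same way $j_t(\Xi_t)$ and $j_t(\Upsilon_t)$ map to the classical $\tilde\Xi_t$ and $\tilde\Upsilon_t$, and because $\Upsilon_t$ has a bounded inverse in $\mathcal{Z}_t$, $j_t(\Upsilon_t^{-1}) = j_t(\Upsilon_t)^{-1} \mapsto \tilde\Upsilon_t^{-1}$.

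With this in hand, each identity comes from substituting the expansions and using linearity. For the commutator I would write $[H_{\rm s}(t),X] = \sum_k f_k(Z_{0:t})[H_{\mathrm{s},k}(t),X]$, where $f_k(Z_{0:t})$ commutes with the system operators since it acts only on the probe. Applying $j_t$ and pulling out the scalars gives $\pi_t([H_{\rm s}(t),X]) = \sum_k f_k(Y_{0:t})\,\pi_t([H_{\mathrm{s},k}(t),X])$. The identity $\pi_t(X') = \mathrm{tr}(\varrho_{\mathrm{sa},t}X')$ and cyclicity of the trace then give $\mathrm{tr}(\varrho H X - \varrho X H) = \mathrm{tr}([\varrho,H]X)$. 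For the dissipator I would expand $L_0^\dag X L_0 = \sum_{j,k}\overline{g_j}g_k\, L_{0,j}^\dag X L_{0,k}$, and similarly expand the anticommutator term, before applying the same pull-out. Cyclicity turns $\mathrm{tr}(\varrho L_{0,j}^\dag X L_{0,k})$ into $\mathrm{tr}(L_{0,k}\varrho L_{0,j}^\dag X)$, which is the stated form (the $\varrho_t$ in the statement should read $\varrho_{\mathrm{sa},t}$). The third identity is handled the same way, pulling out $g_k\tilde\Upsilon_t^{-1}$ and its conjugate. The fourth follows by expanding $(\Upsilon_t + \Xi_t L_0(t))^\dag X(\Upsilon_t + \Xi_t L_0(t))$ into its four groups of terms, pulling out the classical coefficients, recombining, and applying cyclicity.

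The main obstacle is making the pull-out rigorous. The coefficients $f_k(Z_{0:t})$ and $\Upsilon_t$ are operators rather than scalars, and $\pi_t$ is defined only on system--auxiliary operators $X$. I would therefore extend $\pi_t$ to the algebra generated by $B(\mathfrak{h}_{\rm sa})$ and $\mathcal{Z}_t$. This is legitimate because every element of $j_t$ of that algebra commutes with $\mathcal{Y}_t$ by non-demolition, so $\pi_t$ extends as $\mathbb{P}(j_t(\cdot)\mid\mathcal{Y}_t)$. On this extension the $\mathcal{Y}_t$-measurable factors can be moved outside, exactly as in \cite[Remark 4.1]{BvH08}, with Remark \ref{rem:classical-notation} used to identify the resulting quantities with classical random variables.
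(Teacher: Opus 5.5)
Your proposal is correct and follows the same route as the paper: pull the record-dependent coefficients $f_k(Y_{0:t})$, $g_k(Y_{0:t})$, $\tilde{\Upsilon}_t$, $\tilde{\Xi}_t$ out of the quantum conditional expectation using non-demolition and the module property, then rewrite each remaining $\pi_t(\cdot)$ as $\mathrm{tr}(\varrho_{\mathrm{sa},t}\,\cdot\,)$ and apply cyclicity of the trace. Your explicit argument that $j_t(h(Z_{0:t}))=h(Y_{0:t})$ places these coefficients in $\mathcal{Y}_t$ itself, not merely in its commutant, and so fills in the step the paper calls a ``straightforward calculation.''
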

\begin{proof}
From the definition of a quantum conditional expectation and the fact that $j_t(\Upsilon_t)$, $j_t(\Xi_t)$, $j_t(H_{\mathrm{s},k}(t))$ and $j_t(L_{0,l}(t))$ are all in the commutant of $\mathcal{Y}_t$ for all $k,l$, it follows by a straightforward calculation that: 
\begin{eqnarray*}
\lefteqn{\pi_t([H_{\rm s}(t),X])}\\ 
&=& \sum_{k=1}^m f_k(Y_{0:t}) \pi_t([H_{\mathrm{s},k}(t),X])\\
\lefteqn{\pi_t\left(L_{0}(t)^{\dag}XL_0(t)-\frac{1}{2}\{L_0(t)^{\dag}L_0(t),X\} \right)}\\ &=& \sum_{j,k=1}^n g_j(Y_{0:t})^{\dag} g_k(Y_{0:t})  \pi_t\left(\vphantom{\frac{1}{2}} L_{0,j}(t)^{\dag} XL_{0,k}(t)  - \frac{1}{2}\{L_{0,j}(t)^{\dag} L_{0,k}(t),X\}\right)\\
\lefteqn{\pi_t(\Upsilon_t^{-1}XL_{0}(t) + \Upsilon_t^{-1 \dag}L_{0}(t)^{\dag}X)}\\ &=& \sum_{k=1}^n \left( g_k(Y_{0;t})j_t(\Upsilon_t^{-1}) \pi_t(XL_{0,k}(t))  + g_k(Y_{0;t})^{\dag} j_t(\Upsilon^{-1})^{\dag}\pi_t(L_{0,k}(t)^{\dag}X)\right)  \\
\lefteqn{\pi_t((\Upsilon_{t} + \Xi_{t}L_{0}(t))^{\dag}X(\Upsilon_{t} + \Xi_{t}L_{0}(t))}\\ &=&  \pi_t\left(\left(\tilde{\Upsilon}_{t} + \tilde{\Xi}_t \sum_{k=1}^n g_k(Y_{0:t})L_{0,k}(t))\right)^{\dag}X\left(\tilde{\Upsilon}_{t} + \tilde{\Xi}_{t}\sum_{k=1}^n g_k(Y_{):t})L_{0,k}(t)\right)\right).
\end{eqnarray*}
The result follows by using the identity $\pi_t(X)=\mathrm{tr}(\varrho_{\mathrm{sa},t} X)$.
\end{proof}

Define the stochastic superoperators $\mathcal{L}(t)$ and $\mathcal{G}(t)$ as
\begin{eqnarray*}
\mathcal{L}(t) \varrho_{\mathrm{sa},t}&=& \left(  i\left[\varrho_{\mathrm{sa},t},H_{\rm sa}(t)+ \sum_{k=1}^m H_{\mathrm{s},k}(t)f_k(Y_{0:t})\right]  \right. \\
&&\quad + \sum_{j,k=1}^n g_j(Y_{0:t})^{\dag} g_k(Y_{0:t})  \left(\vphantom{\frac{1}{2}} L_{0,k}(t) \varrho_{\mathrm{sa},t} L_{0,j}(t)^{\dag}  -\frac{1}{2}\left\{\varrho_{\mathrm{sa},t}, L_{0,j}(t)^{\dag} L_{0,k}(t)  \right\} \right) \\
&&\quad + \left. \sum_{k=1}^{n_{\rm a}} \left(\vphantom{\frac{1}{2}} L_{k}(t) \varrho_{\mathrm{sa},t}  L_{k}(t)^{\dag}  -\frac{1}{2}\left\{\varrho_{\mathrm{sa},t}, L_{k}(t)^{\dag} L_{k}(t)  \right\} \right) \right)
\end{eqnarray*}
and
\begin{align}
\mathcal{G}(t) \varrho_{\mathrm{sa}.t}&= \sum_{k=1}^n  \left(g_k(Y_{0:t}) \tilde{\Upsilon}_t^{-1}  L_{0,k}(t) \varrho_{\mathrm{sa},t}  + \varrho_{\mathrm{sa},t}  g_k(Y_{0:t})^{\dag} \tilde{\Upsilon}^{-1 \dag} L_{0,k}(t)^{\dag} \right)
\end{align}

From Proposition \ref{prop:filtering-diffusion} we  have the following SDEs for the stochastic master equation in the diffusive case.
\begin{proposition}
\label{prop:SME-diffusion}
Suppose that $\Xi_t=0$ and $\Upsilon_t$ has a bounded inverse $\forall t \in [0,T]$. Then the conditional density operator $\varrho_{\mathrm{sa},\cdot}$ satisfies w.r.t the semimartingale observations $Y_{\cdot}$ the SDE:
\begin{eqnarray*}
d\varrho_{\mathrm{sa},t} &=& \mathcal{L}(t) \varrho_{\mathrm{sa},t} dt\\
&&\; +   \left( \mathcal{G}(t) \varrho_{\mathrm{sa},t}  -\varrho_{\mathrm{sa},t} \sum_{k=1}^n  \mathrm{Tr}\left((g_k(Y_{0:t})\tilde{\Upsilon}_t^{-1} L_{0,k}(t)  + g_k(Y_{0:t})^{\dag} \tilde{\Upsilon}^{-1 \dag}L_{0,k}(t)^{\dag})\varrho_{\mathrm{sa},t} \right)  \right)dI_t,
\end{eqnarray*}
where $I_{\cdot}$ is the innovations process in Proposition \ref{prop:innovations}.
\end{proposition}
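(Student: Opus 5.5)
The plan is to convert the Heisenberg-picture filter of Proposition \ref{prop:filtering-diffusion} into a Schr\"odinger-picture equation for $\varrho_{\mathrm{sa},t}$ by duality, using the identity $\pi_t(X)=\mathrm{tr}(\varrho_{\mathrm{sa},t}X)$ for every $X\in B(\mathfrak{h}_{\rm sa})$. Since $\mathfrak{h}_{\rm sa}$ is finite dimensional, $\varrho_{\mathrm{sa},t}$ is uniquely determined by the numbers $\pi_t(X)$ as $X$ runs over a basis of $B(\mathfrak{h}_{\rm sa})$ (for example the matrix units $|e_i\rangle\langle e_j|$). Likewise $d\varrho_{\mathrm{sa},t}$ is determined by the semimartingale differentials $d\pi_t(X)=\mathrm{tr}(d\varrho_{\mathrm{sa},t}X)$, because $X$ is a fixed, non-random operator.

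The key steps, in order, are as follows.

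\begin{enumerate}
\item Start from the QSDE in Proposition \ref{prop:filtering-diffusion} and pass to classical processes via $\iota$, as in Remark \ref{rem:classical-notation}.

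\item Rewrite each drift term with the preceding Lemma. The commutator term becomes $\mathrm{tr}([\varrho_{\mathrm{sa},t},H_{\rm sa}(t)+\sum_k H_{\mathrm{s},k}(t)f_k(Y_{0:t})]X)$. The $H_{\rm sa}$ part is handled by cyclicity of the trace, which is straightforward since $H_{\rm sa}(t)$ is a deterministic operator. The $k=0$ Lindblad term is the second identity of the Lemma. For the $k=1,\ldots,n_{\rm a}$ terms, cyclicity gives $\pi_t(L_k^{\dag}XL_k-\frac12\{L_k^{\dag}L_k,X\})=\mathrm{tr}((L_k\varrho_{\mathrm{sa},t}L_k^{\dag}-\frac12\{\varrho_{\mathrm{sa},t},L_k^{\dag}L_k\})X)$. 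Collecting these terms yields exactly $\mathrm{tr}((\mathcal{L}(t)\varrho_{\mathrm{sa},t})X)\,dt$.

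\item Treat the innovation coefficient with the third identity of the Lemma. It turns $\pi_t(\Upsilon_t^{-1}XL_0(t)+\Upsilon_t^{-1\dag}L_0(t)^{\dag}X)$ into $\mathrm{tr}((\mathcal{G}(t)\varrho_{\mathrm{sa},t})X)$. Applying the same identity with $X=I$ gives the scalar $\pi_t(\Upsilon_t^{-1}L_0(t)+\Upsilon_t^{-1\dag}L_0(t)^{\dag})=\sum_k\mathrm{Tr}((g_k\tilde\Upsilon_t^{-1}L_{0,k}+g_k^{\dag}\tilde\Upsilon_t^{-1\dag}L_{0,k}^{\dag})\varrho_{\mathrm{sa},t})$. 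Multiplying this scalar by $\pi_t(X)=\mathrm{tr}(\varrho_{\mathrm{sa},t}X)$ produces the subtracted term.

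\item Combine the results: $\mathrm{tr}(d\varrho_{\mathrm{sa},t}X)=\mathrm{tr}(\Phi_t X)$ for all $X$, where $\Phi_t$ denotes the claimed right-hand side. Choosing $X$ from a basis then identifies the SDE componentwise.
\end{enumerate}

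The main obstacle is justifying that the random scalars $g_k(Y_{0:t})$, $\tilde\Upsilon_t^{-1}$ and $f_k(Y_{0:t})$ can be pulled out of $\pi_t$ and placed inside the trace. This needs $j_t(\Upsilon_t)$ to be affiliated to $\mathcal{Y}_t$, with $j_t(\Upsilon_t^{-1})=j_t(\Upsilon_t)^{-1}$ because $j_t$ is a $*$-homomorphism. It also needs $j_t(g_k(Z_{0:t}))=g_k(Y_{0:t})$, which follows since $j_s$ and $j_t$ agree on $\mathcal{Z}_s$-affiliated operators by the cocycle property, so that $Y_s=j_t(Z_s)$.

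These facts are exactly what the preceding Lemma already uses, so I would invoke that Lemma rather than reprove them. The only remaining points are two minor checks: that $\varrho_{\mathrm{sa},t}$ is adapted to $\mathcal{Y}_t$, and that the $dI_t$ term is interpreted in the It\^o sense consistent with Proposition \ref{prop:filtering-diffusion}.
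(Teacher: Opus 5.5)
Your proposal is correct and takes essentially the same approach as the paper, whose entire proof is the duality step $d\pi_t(X)=\mathrm{tr}(d\varrho_{\mathrm{sa},t}X)$ applied to Proposition \ref{prop:filtering-diffusion} together with the preceding Lemma. You additionally spell out the trace-cyclicity handling of the $H_{\rm sa}$ and $k\geq 1$ Lindblad terms, which the Lemma does not cover, and the finite-dimensional basis argument that identifies $d\varrho_{\mathrm{sa},t}$.
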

\begin{proof}
The statement of the theorem is obtained by writing $ d\pi_t(X) =\mathrm{tr}(d\varrho_{\mathrm{sa},t} X)$.
\end{proof}

Analogously, for the pure jump case the SME follows from Proposition \ref{prop:filtering-counting} and is given by the following.
\begin{proposition}
\label{prop:jump-SME}
Suppose that $\Xi_t $ has a bounded inverse $\forall t \in [0,T]$. Then the conditional density operator $\varrho_{\mathrm{sa},t}$ satisfies w.r.t the semimartingale observations $Y_{\cdot}$ the SDE:
\label{prop:SME-counting}
\begin{eqnarray*}
d\varrho_{\mathrm{sa},t} &=& \mathcal{L}(t) \varrho_{\mathrm{sa},t} dt + \left(\frac{\mathcal{M}(t-)  \varrho_{\mathrm{sa},t-} \mathcal{M}(t-)^{\dag}}{\mathrm{Tr}(\varrho_{\mathrm{sa},t-}\mathcal{M}(t-)^{\dag}\mathcal{M}(t-))}-\varrho_{\mathrm{sa},t-}\right)\tilde{\Xi}_{t-}^{-1}dI_t,
\end{eqnarray*}
where $I_{\cdot}$ is the innovations process in Proposition \ref{prop:innovations} and  
$$
\mathcal{M}(t)= \tilde{\Upsilon}_{t} + \tilde{\Xi}_{t} \sum_{k=1}^n g_k(Y_{0:t})L_{0,k}(t)
$$  
\end{proposition}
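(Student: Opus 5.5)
The plan is to mirror the proof of Proposition \ref{prop:SME-diffusion}: start from the filtering equation of Proposition \ref{prop:filtering-counting}, write $\pi_t(X)=\mathrm{tr}(\varrho_{\mathrm{sa},t}X)$ and $d\pi_t(X)=\mathrm{tr}(d\varrho_{\mathrm{sa},t}X)$, and rewrite each term on the right-hand side as the trace of some operator against $X$. Since this holds for all $X\in B(\mathfrak{h}_{\rm sa})$ and $\mathfrak{h}_{\rm sa}$ is finite dimensional, the trace pairing is nondegenerate. Hence $d\varrho_{\mathrm{sa},t}$ is uniquely identified as the operator-valued differential stated in the proposition.

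\textbf{Drift term.} By the first two identities of the preceding lemma, the contributions from $H_{\rm s}(t)$ and from the probe dissipator $L_0(t)$ become traces against $X$. For the $H_{\rm s}$ part this uses the cyclicity identity $\mathrm{tr}(\varrho[H,X])=\mathrm{tr}([\varrho,H]X)$. The remaining terms involve $H_{\rm sa}(t)$ and $L_k(t)$ for $k\ge 1$. These are deterministic operators on $\mathfrak{h}_{\rm sa}$ at each $t$, so $\pi_t$ acts on them directly and their Lindblad terms move onto $\varrho_{\mathrm{sa},t}$ by cyclicity. Collecting all pieces gives exactly $\mathrm{tr}((\mathcal{L}(t)\varrho_{\mathrm{sa},t})X)\,dt$. (I take $H_0$ in Proposition \ref{prop:filtering-counting} to mean $H_{\rm s}$.)

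\textbf{Jump term.} The last identity of the lemma, evaluated at $t-$, rewrites both the numerator and the denominator in Proposition \ref{prop:filtering-counting}. With $\mathcal{M}(t)=\tilde{\Upsilon}_t+\tilde{\Xi}_t\sum_k g_k(Y_{0:t})L_{0,k}(t)$, the numerator becomes $\mathrm{tr}(\mathcal{M}(t-)\varrho_{\mathrm{sa},t-}\mathcal{M}(t-)^{\dag}X)$. Taking $X=I$ shows the denominator is $\mathrm{Tr}(\varrho_{\mathrm{sa},t-}\mathcal{M}(t-)^{\dag}\mathcal{M}(t-))$. Subtracting $\pi_{t-}(X)=\mathrm{tr}(\varrho_{\mathrm{sa},t-}X)$ and multiplying by the scalar $\tilde{\Xi}_{t-}^{-1}dI_t$ gives the trace pairing of the stated jump term with $X$.

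\textbf{Main obstacle: justifying the left limits.} The lemma is stated at time $t$, but the jump term needs it at $t-$, and the scalar factors $g_k(Y_{0:t})$, $\tilde{\Upsilon}_t$, $\tilde{\Xi}_t$ must be pulled out of $\pi_{t-}$ as classical $\mathcal{Y}_{t-}$-measurable quantities. I would handle this by applying the lemma along $s\uparrow t$ and passing to the limit. This is justified because the processes are bounded and measurable by the standing assumption and $\varrho_{\mathrm{sa},\cdot}$ has càdlàg paths. Alternatively, one can simply interpret $\mathcal{M}(t-)$ as the left limit of the functional, exactly as Proposition \ref{prop:filtering-counting} does implicitly. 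The rest is bookkeeping, using the convention of Remark \ref{rem:classical-notation} that operators affiliated to $\mathcal{Y}_t$ are identified with their classical counterparts.
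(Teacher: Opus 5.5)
Your proposal is correct and follows the paper's route. Substitute $\pi_t(X)=\mathrm{tr}(\varrho_{\mathrm{sa},t}X)$ and the lemma's identities into Proposition \ref{prop:filtering-counting}, then read off $d\varrho_{\mathrm{sa},t}$ from $d\pi_t(X)=\mathrm{tr}(d\varrho_{\mathrm{sa},t}X)$ for all $X$. This is exactly how the paper handles the diffusive case, and it says the jump case is analogous.

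Your extra treatment of left limits goes beyond what the paper does, since the paper just evaluates the identities at $t-$ implicitly. Boundedness and measurability alone do not guarantee that left limits exist, so your second option — reading $\mathcal{M}(t-)$ as the left limit implicit in Proposition \ref{prop:filtering-counting} — is the right one.
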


Define a linear superprojection operator $\mathcal{P}:B(\mathfrak{h}_{\rm sa}) \rightarrow B(\mathfrak{h}_{\rm sa})$ with the properties \cite{Nurd25}:
\begin{enumerate}
\item $\mathcal{P}$ maps density operators to density operators.

\item $\mathrm{Tr}_{\mathfrak{h}_{\rm a}}(\mathcal{P} X) = \mathrm{Tr}_{\mathfrak{h}_{\rm a}}(X)$ for all $X \in B(\mathfrak{h}_{\rm sa})$.

\item $\mathcal{P} ((O \otimes  I_{\mathfrak{h}_{\rm a}}) X) = O  \mathcal{P} X$ for any operators $O \in B(\mathfrak{h}_{\rm s})$ and $X \in B(\mathfrak{h}_{\rm sa})$. 
\end{enumerate}
Let $\mathcal{Q}= \mathcal{I}-\mathcal{P}$, where $\mathcal{I}$ is the identity superoperator on $\mathscr{L}(\mathfrak{h}_{\rm sa})$. For any operator $Z \in \mathscr{L}(\mathfrak{h}_{\rm sa})$ define $Z^p=\mathcal{P}Z$ and  $Z^q=\mathcal{Q}Z$. Also, for any linear superoperator $\mathcal{X}:  \mathscr{L}(\mathfrak{h}_{\rm sa}) \rightarrow  \mathscr{L}(\mathfrak{h}_{\rm sa})$, define  $\mathcal{X}^{pp} = \mathcal{P} \mathcal{X} \mathcal{P}$,  $\mathcal{X}^{pq} = \mathcal{P} \mathcal{X} \mathcal{Q}$, $\mathcal{X}^{qp} = \mathcal{Q} \mathcal{X} \mathcal{P}$, and  $\mathcal{X}^{qq} = \mathcal{Q} \mathcal{X} \mathcal{Q}$. Then we have:
\begin{lemma}
\label{lem:projector-prop}
$\mathcal{P} (X(O \otimes  I_{\mathfrak{h}_{\rm a}}) ) =   (\mathcal{P} X) O$. Moreover, $[\mathcal{G}(t),\mathcal{P}]=[\mathcal{G}(t),\mathcal{Q}]=0$ and $[\mathcal{M}(t),\mathcal{P}]=[\mathcal{M}(t),\mathcal{Q}]=0$ for all $t$. 
\end{lemma}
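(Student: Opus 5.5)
We need to prove the right-multiplication property from properties 1–3 of $\mathcal{P}$, and then the commutation claims for $\mathcal{G}(t)$ and $\mathcal{M}(t)$.

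\textbf{Step 1: right multiplication.} Property 1 says $\mathcal{P}$ maps density operators to density operators. Extending linearly, $\mathcal{P}$ maps self-adjoint operators to self-adjoint operators, since any self-adjoint operator is a real combination of density operators. Writing a general $X$ as $X_1+iX_2$ with $X_1,X_2$ self-adjoint then gives $\mathcal{P}(X^{\dag})=(\mathcal{P}X)^{\dag}$ for all $X\in B(\mathfrak{h}_{\rm sa})$. Now apply property 3 to $X^{\dag}$ and $O^{\dag}$:
\[
\mathcal{P}\big(X(O\otimes I_{\mathfrak{h}_{\rm a}})\big)=\Big(\mathcal{P}\big((O^{\dag}\otimes I_{\mathfrak{h}_{\rm a}})X^{\dag}\big)\Big)^{\dag}=\big(O^{\dag}\,\mathcal{P}X^{\dag}\big)^{\dag}=(\mathcal{P}X)\,O .
\]
The only delicate point in the whole lemma is this hermiticity-preservation step. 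It relies on finite dimensionality, so that $\mathcal{P}$ is defined on all of $B(\mathfrak{h}_{\rm sa})$ and is determined by its action on density operators.

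\textbf{Step 2: $\mathcal{G}(t)$.} At each fixed $t$ (equivalently, pathwise for each realisation of $Y_{0:t}$), the coefficients $c_k:=g_k(Y_{0:t})\tilde{\Upsilon}_t^{-1}$ are scalars. The operators $L_{0,k}(t)$ act on $\mathfrak{h}_{\rm s}$ alone and enter $\mathcal{G}(t)$ through their ampliations $L_{0,k}(t)\otimes I_{\mathfrak{h}_{\rm a}}$. Hence
\[
\mathcal{G}(t)X=\sum_k\big(c_k\,(L_{0,k}(t)\otimes I)X+\bar c_k\,X(L_{0,k}(t)^{\dag}\otimes I)\big),
\]
a sum of left and right multiplications by principal-system operators. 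Using linearity of $\mathcal{P}$, property 3 and Step 1,
\[
\mathcal{P}\mathcal{G}(t)X=\sum_k\big(c_k L_{0,k}(t)\,\mathcal{P}X+\bar c_k\,(\mathcal{P}X)L_{0,k}(t)^{\dag}\big)=\mathcal{G}(t)\mathcal{P}X .
\]
Thus $[\mathcal{G}(t),\mathcal{P}]=0$. Since $\mathcal{Q}=\mathcal{I}-\mathcal{P}$ and $\mathcal{G}(t)$ commutes with $\mathcal{I}$, it follows that $[\mathcal{G}(t),\mathcal{Q}]=-[\mathcal{G}(t),\mathcal{P}]=0$.

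\textbf{Step 3: $\mathcal{M}(t)$.} Here $\mathcal{M}(t)=\tilde{\Upsilon}_t+\tilde{\Xi}_t\sum_k g_k(Y_{0:t})L_{0,k}(t)$ is an operator on $\mathfrak{h}_{\rm s}$ with scalar (random) coefficients; the scalar $\tilde{\Upsilon}_t$ is read as $\tilde{\Upsilon}_t I$. As in Proposition~\ref{prop:jump-SME}, we interpret $\mathcal{M}(t)$ as the left-multiplication superoperator $X\mapsto\mathcal{M}(t)X$. Property 3 gives $\mathcal{P}(\mathcal{M}(t)X)=\mathcal{M}(t)\mathcal{P}X$, and the same $\mathcal{I}-\mathcal{P}$ argument handles $\mathcal{Q}$. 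Step 1 further yields $\mathcal{P}(\mathcal{M}(t)X\mathcal{M}(t)^{\dag})=\mathcal{M}(t)(\mathcal{P}X)\mathcal{M}(t)^{\dag}$, which is the form actually needed in the jump SME. If instead $\mathcal{M}(t)$ is read as the sandwich map $X\mapsto\mathcal{M}(t)X\mathcal{M}(t)^{\dag}$, this identity is exactly the required commutation, so both readings are covered.
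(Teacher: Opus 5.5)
Your proof is correct and follows the same route as the paper. The paper also obtains the right-multiplication rule from the adjoint identity $\mathcal{P}(X^{\dag})=(\mathcal{P}X)^{\dag}$ combined with property 3, and then gets $[\mathcal{G}(t),\mathcal{P}]=[\mathcal{M}(t),\mathcal{P}]=0$ (hence the $\mathcal{Q}=\mathcal{I}-\mathcal{P}$ versions) because these maps act by left and right multiplication by principal-system operators with scalar random coefficients. The paper simply asserts the adjoint identity, whereas you derive it from property 1 by writing self-adjoint operators as differences of positive ones. You also make explicit both possible readings of $[\mathcal{M}(t),\mathcal{P}]=0$, which the paper leaves implicit.
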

\begin{proof}
Note that $\mathcal{P}(X^{\dag})=\mathcal{P}(X)^{\dag}$. Therefore, $\mathcal{P} (X(O \otimes  I_{\mathfrak{h}_{\rm a}}) )= \mathcal{P} ((O^{\dag} \otimes  I_{\mathfrak{h}_{\rm a}})X^{\dag})^{\dag}=(O^{\dag} \mathcal{P}X^{\dag})^{\dag}=(\mathcal{P}X)O$. From the definition of the superoperator $\mathcal{G}(t)$ given in \eqref{eq:gen-diffusive} and the fact that $\mathcal{P} L_{0,k}(t)\varrho_{\mathrm{sa},t} =L_{0,k}(t) \mathcal{P} \varrho_{\mathrm{sa},t}$ for $k=1,\ldots,n$ (taking $g(Y_{0:\cdot})$ as a classical stochastic process per Remark \ref{rem:classical-notation}), it immediately follows that  $[\mathcal{G}(t),\mathcal{P}]=0$ and therefore also $[\mathcal{G}(t),\mathcal{Q}]=0$ for all $t$. By a similar argument, it also holds that $[\mathcal{M}(t),\mathcal{P}]=[\mathcal{M}(t),\mathcal{Q}]=0$ for all $t$.
\end{proof}

\subsection{Stochastic quantum non-Markovian equation: The diffusive case}
Consider first the diffusive case. By Lemma \ref{lem:projector-prop} it follows in analogy to \cite[\S III]{Nurd25} that 
\begin{align}
d\varrho^{p}_{{\rm sa},t}&=(\mathcal{L}(t)^{pp} \varrho^{p}_{{\rm sa},t}+ \mathcal{L}(t)^{pq}  \varrho^{q}_{{\rm sa},t})dt \notag \\
&\quad + \left(\mathcal{G}(t) \varrho^{p}_{{\rm sa},t} - \varrho^{p}_{{\rm sa},t} \sum_{k=1}^n  \mathrm{Tr}\left((g_k(Y_{0:t})\tilde{\Upsilon}_t^{-1} L_{0,k}(t)  + g_k(Y_{0:t})^{\dag} \tilde{\Upsilon}^{-1 \dag}L_{0,k}(t)^{\dag})\varrho^p_{\mathrm{sa},t} \right)\right) dI_t, \label{eq:SDE-p}\\
d\varrho^{q}_{{\rm sa},t}&=(\mathcal{L}(t)^{qp} \varrho^{p}_{{\rm sa},t}+ \mathcal{L}(t)^{qq}  \varrho^{q}_{{\rm sa},t} )dt\notag \\
&\quad + \left(\mathcal{G}(t) \varrho^{q}_{{\rm sa},t} - \varrho^{q}_{{\rm sa},t} \sum_{k=1}^n  \mathrm{Tr}\left((g_k(Y_{0:t})\tilde{\Upsilon}_t^{-1} L_{0,k}(t)  + g_k(Y_{0:t})^{\dag} \tilde{\Upsilon}^{-1 \dag}L_{0,k}(t)^{\dag})\varrho^p_{\mathrm{sa},t}\right) \right) dI_t. \label{eq:SDE-q}
\end{align}
It follows that the SDE for $\varrho^{q}_{{\rm sa},\cdot}$ is linear for a given  $\varrho^{p}_{{\rm sa},\cdot}$ and can be rewritten as:
\begin{align}
d\varrho^{q}_{{\rm sa},t}&=d\mathcal{A}_{ \varrho^p_{{\rm sa},t}}(t) \varrho^{q}_{{\rm sa},t}  + \mathcal{L}(t)^{qp}  \varrho^{p}_{{\rm sa},t}dt \label{eq:SDE-q}
\end{align}
where $\mathcal{A}_{ \varrho^p_{{\rm sa},t}}(t)$ is a stochastic generator given by: 
\begin{eqnarray}
\lefteqn{d\mathcal{A}_{\varrho^p_{{\rm sa},t}}(t)} \notag \\  
&=&  \mathcal{L}(t)^{qq}dt  \notag \\
&&\; +\left( \mathcal{G}(t)-  \sum_{k=1}^n  \mathrm{tr}\left((g_k(Y_{0:t})\tilde{\Upsilon}_t^{-1} L_{0,k}(t)  + g_k(Y_{0:t})^{\dag} \tilde{\Upsilon}^{-1 \dag}L_{0,k}(t)^{\dag})\varrho^p_{\rm sa,t} \right) \mathcal{I} \right) dI_t, \label{eq:gen-diffusive}
\end{eqnarray}
and $\mathcal{I}$ is the identity operator on $\mathscr{L}(\mathfrak{h}_{\rm sa})$ as before. Set $\mathcal{A}_{ \varrho^p_{{\rm sa},t_0}}=0$ at an initial time $t_0$. 

Let $\Phi_{t,t_0}$ be a superoperator on $\mathscr{L}(\mathfrak{h}_{\rm sa})$ that is the solution to the SDE:
\begin{align}
d\Phi_{t,t_0} = d\mathcal{A}_{ \varrho^p_{{\rm sa},t}}(t) \Phi_{t,t_0},   \label{eq:SDE-Phi-diff}
\end{align}
with the initial condition $\Phi_{t_0,t_0} =\mathcal{I}$. Under the assumption that $\mathcal{A}_{ \varrho^p_{{\rm sa},\cdot}}(\cdot)$ is a matrix-valued semimartingale, the unique solution $\Phi_{t,t_0}$ is called the stochastic exponential of $\mathcal{A}_{ \varrho^p_{{\rm sa},\cdot}}(\cdot)$, which is invertible for each $t$ \cite{DY08}. Then the following result follows from the same proof as \cite[Lemma 1 and Theorem 1]{Nurd25} using \cite[Theorem 1.2]{DY08}.

\begin{theorem}
\label{thm:non-Markovian-diffusion}
Suppose that $\Xi_t=0$ and $\Upsilon_t$ has a bounded inverse  $\forall t \in [0,T]$. The matrix-valued stochastic process $\varrho^{p}_{{\rm sa},\cdot}$ satisfies the SDE
\begin{align}
d\varrho^{p}_{{\rm sa},t} &=\left(\vphantom{\int_0^t}\mathcal{L}(t)^{pp} \varrho^{p}_{{\rm sa},t}   + \mathcal{L}(t)^{pq} \Phi_{t,t_0} \varrho^{q}_{{\rm sa},t_0} + \int_{t_0}^t \mathcal{K}_{\rm s}(t,t')  \varrho^{p}_{{\rm sa},t'}dt' \right)dt \notag \\
&\quad + \left(\mathcal{G}(t) \varrho^{p}_{{\rm sa},t} - \varrho^{p}_{{\rm sa},t} \sum_{k=1}^n  \mathrm{Tr}\left((g_k(Y_{0:t})\tilde{\Upsilon}_t^{-1} L_{0,k}(t)  + g_k(Y_{0:t})^{\dag} \tilde{\Upsilon}^{-1 \dag}L_{0,k}(t)^{\dag})\varrho^p_{\mathrm{sa},t} \right)\right)dI_t, \label{eq:diffusion-reduced-SDE-p}
\end{align}
where $\Phi_{t,t_0}$ is the stochastic exponential solving the SDE \eqref{eq:SDE-Phi-diff} with a stochastic generator given by \eqref{eq:gen-diffusive}, and $\mathcal{K}_{\rm s}(t,t') = \mathcal{L}(t)^{pq} \Phi_{t,t_0} \Phi_{t',t_0}^{-1} \mathcal{L}(t')^{qp}$ is a two-time stochastic kernel. 
\end{theorem}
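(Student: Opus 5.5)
The plan is to solve the linear SDE \eqref{eq:SDE-q} for $\varrho^q_{{\rm sa},\cdot}$ explicitly by variation of constants, treating $\varrho^p_{{\rm sa},\cdot}$ as a given semimartingale. The solution is then substituted into the $\mathcal{L}(t)^{pq}\varrho^q_{{\rm sa},t}$ term of \eqref{eq:SDE-p}. The $dI_t$ term of \eqref{eq:SDE-p} involves only $\varrho^p_{{\rm sa},t}$, because $[\mathcal{G}(t),\mathcal{P}]=0$ by Lemma \ref{lem:projector-prop}, so it passes through unchanged into \eqref{eq:diffusion-reduced-SDE-p}. The whole task is therefore to show that
\[
\varrho^q_{{\rm sa},t}=\Phi_{t,t_0}\varrho^q_{{\rm sa},t_0}+\int_{t_0}^t \Phi_{t,t_0}\Phi_{t',t_0}^{-1}\mathcal{L}(t')^{qp}\varrho^p_{{\rm sa},t'}\,dt'.
\]
Substituting this into $\mathcal{L}(t)^{pq}\varrho^q_{{\rm sa},t}$ yields exactly $\mathcal{L}(t)^{pq}\Phi_{t,t_0}\varrho^q_{{\rm sa},t_0}+\int_{t_0}^t\mathcal{K}_{\rm s}(t,t')\varrho^p_{{\rm sa},t'}dt'$.

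\textbf{Step 1: $\Phi_{t,t_0}$ is well defined.} Identify $\mathscr{L}(\mathfrak{h}_{\rm sa})$ with a finite-dimensional matrix space, so superoperators are matrices. The generator \eqref{eq:gen-diffusive} is a matrix-valued semimartingale. Its $dt$ part $\mathcal{L}(t)^{qq}$ has bounded coefficients, since $f_k,g_k$ are essentially bounded and $\tilde\Upsilon^{-1}$ is bounded. Its $dI_t$ part is built from $\mathcal{G}(t)$ and a scalar multiple of $\mathcal{I}$ involving $\varrho^p_{{\rm sa},t}$. Here $\varrho^p_{{\rm sa},t}$ is a density operator, because $\mathcal{P}$ maps density operators to density operators, so this scalar is bounded as well. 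The innovations $I_\cdot$ is a continuous martingale in this case ($\Xi_t=0$; Proposition \ref{prop:innovations}). By \cite[Theorem 1.2]{DY08}, the stochastic exponential $\Phi_{t,t_0}$ then exists, is unique and is invertible. Moreover its inverse satisfies the left SDE
\[
d\Phi_{t,t_0}^{-1}=\Phi_{t,t_0}^{-1}\bigl(-d\mathcal{A}(t)+d[\mathcal{A},\mathcal{A}]_t\bigr),
\]
where $[\mathcal{A},\mathcal{A}]_t$ is the matrix quadratic covariation.

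\textbf{Step 2: variation of constants.} Set $\xi_t=\Phi_{t,t_0}^{-1}\varrho^q_{{\rm sa},t}$ and compute $d\xi_t$ with the matrix Itô product rule. Using \eqref{eq:SDE-q}, the terms $-\Phi^{-1}d\mathcal{A}\,\varrho^q$ and $\Phi^{-1}d\mathcal{A}\,\varrho^q$ cancel. The covariation correction $\Phi^{-1}d[\mathcal{A},\mathcal{A}]\varrho^q$ cancels against the cross term $d\Phi^{-1}\,d\varrho^q=-\Phi^{-1}d\mathcal{A}\,d\mathcal{A}\,\varrho^q$. The forcing term $\mathcal{L}^{qp}\varrho^p\,dt$ has finite variation and so contributes no covariation. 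This leaves $d\xi_t=\Phi_{t,t_0}^{-1}\mathcal{L}(t)^{qp}\varrho^p_{{\rm sa},t}dt$. Integrating from $t_0$ with $\xi_{t_0}=\varrho^q_{{\rm sa},t_0}$ and multiplying by $\Phi_{t,t_0}$ gives the displayed formula. Note that $\Phi_{t,t_0}$ is $\mathcal{Y}_t$-adapted, so it can be pulled inside the $dt'$ integral, since that integral is an ordinary Lebesgue integral pathwise.

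\textbf{Main obstacle.} The delicate point is Step 2's Itô bookkeeping. The inverse of a stochastic exponential carries a quadratic-variation correction, and the cancellation must be verified carefully. This is where the semimartingale property of $\mathcal{A}$ and the boundedness from Step 1 are genuinely used: the $dI_t$ coefficient depends on $\varrho^p$, which makes the generator itself random and path-dependent. Since the structure is identical to \cite[Lemma 1 and Theorem 1]{Nurd25}, with deterministic coefficients replaced by $Y$-adapted bounded ones, I would follow that argument line by line. The only changes are checking adaptedness and boundedness of the feedback-dependent terms $f_k(Y_{0:t})$, $g_k(Y_{0:t})$ and $\tilde\Upsilon_t^{-1}$.
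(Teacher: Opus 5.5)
Your proposal is correct and follows essentially the same route as the paper. The paper's proof defers to the variation-of-constants argument of \cite[Lemma 1 and Theorem 1]{Nurd25}, together with existence and invertibility of the stochastic exponential from \cite[Theorem 1.2]{DY08}: it solves the linear SDE for $\varrho^{q}_{{\rm sa},\cdot}$ through $\Phi_{t,t_0}$ and its inverse, then substitutes the result into the $\mathcal{L}(t)^{pq}\varrho^{q}_{{\rm sa},t}$ term of \eqref{eq:SDE-p}, which is exactly what your Steps 1 and 2 do.
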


\subsection{Stochastic quantum non-Markovian equation: The pure jump case}
Turning our attention to the pure jump case, we have similarly to the diffusive case from Theorem \ref{prop:SME-counting} that
\begin{eqnarray*}
d\varrho^p_{\rm sa,t} &=&  (\mathcal{L}(t)^{pp} \varrho^{p}_{{\rm sa},t}+ \mathcal{L}(t)^{pq}  \varrho^{q}_{{\rm sa},t})dt \notag  + \left(\mathcal{N}_{\varrho^p_{\mathrm{sa},t-}}(t-) \varrho^p_{\mathrm{sa},t-}-\varrho^p_{\mathrm{sa},t-}\right)\tilde{\Xi}_{t-}^{-1} dI_t,\\
d\varrho^q_{\rm sa,t} &=& (\mathcal{L}(t)^{qp} \varrho^{p}_{{\rm sa},t}+ \mathcal{L}(t)^{qq}  \varrho^{q}_{{\rm sa},t})dt + \left(\mathcal{N}_{\varrho^p_{\mathrm{sa},t-}}(t-) \varrho^q_{\mathrm{sa},t-}-\varrho^q_{\mathrm{sa},t-}\right)\tilde{\Xi}_{t-}^{-1} dI_t,
\end{eqnarray*}
where the superoperator $\mathcal{N}_{\varrho^p_{\mathrm{sa},t}}(t)$  is given by:
$$
\mathcal{N}_{\varrho^p_{\mathrm{sa},t}}(t) \rho = \frac{\mathcal{M}(t) \rho \mathcal{M}(t)^{\dag}}{\mathrm{Tr}\left(\varrho^p_{\mathrm{sa},t}\mathcal{M}(t)^{\dag}\mathcal{M}(t)\right)}
$$
In this case the linear SDE for $\varrho^{q}_{{\rm sa},\cdot}$ is of the same form as \eqref{eq:SDE-q} but where the stochastic generator $\mathcal{A}_{ \varrho^p_{{\rm sa},t}}(t)$ is given by: 
\begin{eqnarray}
d\mathcal{A}_{ \varrho^p_{{\rm sa},t}}(t)  &=&  \mathcal{L}(t)^{qq}dt +( \mathcal{N}_{\varrho^p_{\mathrm{sa},t-}}(t-)-   \mathcal{I}) \tilde{\Xi}^{-1}_{t-} dI_t.
\label{eq:gen-jump}
\end{eqnarray}
Let $\Phi_{t,t_0}$ be a superoperator on $\mathscr{L}(\mathfrak{h}_{\rm sa})$ that is the solution to the SDE:
\begin{align}
d\Phi_{t,t_0} = d\mathcal{A}_{ \varrho^p_{{\rm sa},t}}(t) \Phi_{t-,t_0},   \label{eq:SDE-Phi-jump}
\end{align}
with the initial condition $\Phi_{t_0,t_0} =\mathcal{I}$. Under the assumption that $\mathcal{A}_{ \varrho^p_{{\rm sa},\cdot}}(\cdot)$ is a matrix-valued semimartingale such that $I + \Delta \mathcal{A}_{ \varrho^p_{{\rm sa},t}}(t)$ is invertible for all $s \in[0,T]$, where $\Delta \mathcal{A}_{ \varrho^p_{{\rm sa},t}}(t) = \mathcal{A}_{\varrho^p_{{\rm sa},t}}(t)-\mathcal{A}_{\varrho^p_{{\rm sa},t-}}(t-)$, the unique stochastic exponential $\Phi_{t,t_0}$  is invertible for each $t$ \cite{DY08}. The analog of Theorem \ref{thm:non-Markovian-diffusion} for the pure jump case is given by the following theorem. 
\begin{theorem}
\label{thm:non-Markovian-jump}
Suppose that $\Xi_t $ has a bounded inverse  and $I + \Delta \mathcal{A}_{ \varrho^p_{{\rm sa},s}}(t)$ is invertible for all $t \in[0,T]$. The matrix-valued stochastic process $\varrho^{p}_{{\rm sa},\cdot}$ satisfies the SDE
\begin{align}
d\varrho^{p}_{{\rm sa},t} &=\left(\vphantom{\int_0^t}\mathcal{L}(t)^{pp} \varrho^{p}_{{\rm sa},t}   + \mathcal{L}(t)^{pq} \Phi_{t,t_0} \varrho^{q}_{{\rm sa},t_0} + \int_{t_0}^t \mathcal{K}_{\rm s}(t,t')  \varrho^{p}_{{\rm sa},t'}dt' \right)dt \notag \\
&\quad + \left(\mathcal{N}_{\varrho^p_{\mathrm{sa},t-}}(t-)  -\mathcal{I} \right)\varrho^p_{\mathrm{sa},t-} \tilde{\Xi}_{t-}^{-1} dI_t \label{eq:jump-reduced-SDE-p}
\end{align}
where $\Phi_{t,t_0}$ is the stochastic exponential solving the SDE \eqref{eq:SDE-Phi-diff} with a stochastic generator given by \eqref{eq:gen-jump}, and $\mathcal{K}_{\rm s}(t,t') = \mathcal{L}(t)^{pq} \Phi_{t,t_0} \Phi_{t'-,t_0}^{-1} \mathcal{L}(t'-)^{qp}$ is a two-time stochastic kernel. 
\end{theorem}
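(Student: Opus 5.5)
The plan is to mimic the diffusive argument of Theorem \ref{thm:non-Markovian-diffusion}, now for semimartingales with jumps. First, I would derive the split equations for $\varrho^p_{\mathrm{sa},t}$ and $\varrho^q_{\mathrm{sa},t}$ displayed before the theorem. To do so, apply $\mathcal{P}$ and $\mathcal{Q}$ to the jump SME of Proposition \ref{prop:jump-SME}. Lemma \ref{lem:projector-prop} gives $\mathcal{P}(\mathcal{M}(t)\rho\mathcal{M}(t)^{\dag}) = \mathcal{M}(t)(\mathcal{P}\rho)\mathcal{M}(t)^{\dag}$, since $\mathcal{M}(t)$ acts only on the principal system with classical coefficients. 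Hence $\mathcal{N}_{\varrho^p_{\mathrm{sa},t-}}(t-)$ commutes with $\mathcal{P}$ and $\mathcal{Q}$.

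The denominator needs one check: it is a trace against an operator of the form $O\otimes I_{\mathfrak{h}_{\rm a}}$. Property 2 of $\mathcal{P}$ gives $\mathrm{Tr}(\varrho_{\mathrm{sa},t-}\mathcal{M}^{\dag}\mathcal{M})=\mathrm{Tr}(\varrho^p_{\mathrm{sa},t-}\mathcal{M}^{\dag}\mathcal{M})$, which justifies writing $\mathcal{N}$ with subscript $\varrho^p$. With $\varrho^p$ regarded as given, the $\varrho^q$ equation is then linear, $d\varrho^q_{\mathrm{sa},t} = d\mathcal{A}(t)\varrho^q_{\mathrm{sa},t-} + \mathcal{L}(t)^{qp}\varrho^p_{\mathrm{sa},t}dt$, with generator \eqref{eq:gen-jump}.

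Second, I would solve this linear equation by variation of constants, using the stochastic exponential $\Phi_{t,t_0}$ of \eqref{eq:SDE-Phi-jump}. By the invertibility hypothesis on $I+\Delta\mathcal{A}$ and \cite[Theorem 1.2]{DY08}, $\Phi_{t,t_0}$ is invertible, and its inverse satisfies an explicit SDE. I would verify the candidate
\[
\varrho^q_{\mathrm{sa},t} = \Phi_{t,t_0}\varrho^q_{\mathrm{sa},t_0} + \Phi_{t,t_0}\int_{t_0}^t \Phi_{t'-,t_0}^{-1}\mathcal{L}(t'-)^{qp}\varrho^p_{\mathrm{sa},t'}\,dt'
\]
by the integration-by-parts (product) formula for semimartingales. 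Because the integral term has finite variation and is continuous in $t$, it has no covariation with $\Phi$, and the product rule returns the linear SDE. Uniqueness for linear SDEs with bounded coefficients then identifies this expression with $\varrho^q$.

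Finally, I would substitute this expression into the $dt$ term $\mathcal{L}(t)^{pq}\varrho^q_{\mathrm{sa},t}$ of the $\varrho^p$ equation. This produces $\mathcal{L}(t)^{pq}\Phi_{t,t_0}\varrho^q_{\mathrm{sa},t_0}$ plus $\int_{t_0}^t\mathcal{K}_{\rm s}(t,t')\varrho^p_{\mathrm{sa},t'}dt'$. The jump term is unchanged, since $(\mathcal{N}-\mathcal{I})$ acts on $\varrho^p$ only. Replacing $t'-$ by $t'$ inside a Lebesgue integral is harmless, because there are only countably many jumps.

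The main obstacle is the verification step with left limits. One must confirm that $\Phi_{t,t_0}\Phi_{t'-,t_0}^{-1}$ is the correct transition operator when $\Phi$ jumps. For this I would use the explicit formula for $d(\Phi^{-1})$ from \cite{DY08}, which involves $(I+\Delta\mathcal{A})^{-1}$. I would also need to check that the bracket $[\Phi,\Phi^{-1}]$ terms cancel, so that $d(\Phi_{t,t_0}\Phi^{-1}_{t,t_0})=0$.
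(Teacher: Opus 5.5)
Your proposal is correct and follows the paper's route, which reruns the argument of \cite[Lemma 1 and Theorem 1]{Nurd25} with \cite[Theorem 1.2]{DY08}: project the jump SME using Lemma~\ref{lem:projector-prop} (with the denominator reduced to $\varrho^p_{\mathrm{sa},t-}$), treat the $\mathcal{Q}$-equation as a linear SDE with generator \eqref{eq:gen-jump}, solve it by variation of constants with the invertible stochastic exponential, and substitute into the $\mathcal{P}$-equation. The ``main obstacle'' you flag is not actually needed, because your product-rule check only uses the pointwise identity $\Phi_{t-,t_0}\Phi_{t-,t_0}^{-1}=\mathcal{I}$, where $\Phi_{t-,t_0}=(\mathcal{I}+\Delta\mathcal{A}_{\varrho^p_{\mathrm{sa},t}}(t))^{-1}\Phi_{t,t_0}$ is invertible, so you need neither an SDE for $\Phi^{-1}$ nor any bracket cancellation.
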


\subsection{Discussion}
\label{sec:conclu}
As in the equations obtained in \cite{Nurd25}, it can be seen from the stochastic equations given in Theorems \ref{thm:non-Markovian-diffusion} 
and \ref{thm:non-Markovian-jump} 
there is a flow of past states $\varrho^p_{\mathrm{s},t'}$ to $\varrho^q_{\mathrm{sa},t}$ for $t'<t$ that eventually flows to $\varrho^p_{\mathrm{sa},t}$. This flow is captured in the second and third terms in the drift component of the non-Markovian SDE, which can be viewed as the output of a stochastic linear time-varying system with multiplicative noise as described by the SDE \eqref{eq:SDE-q} 
(but with different stochastic generators for the diffusive and pure jump cases) 
with the output 
\begin{equation}
z_t =\mathcal{L}(t)^{pq}\varrho^q_{\mathrm{sa},t}, \label{eq:z-equation}
\end{equation}
and initial condition $\varrho^q_{\mathrm{sa},t_0}$. Therefore, in terms of the output of this stochastic system we can write  \eqref{eq:diffusion-reduced-SDE-p} as:
\begin{align*}
d\varrho^{p}_{{\rm sa},t} &=\left(\vphantom{\int_0^t}\mathcal{L}(t)^{pp} \varrho^{p}_{{\rm sa},t}   + z_t \right) dt \notag \\
&\quad + \left(\mathcal{G}(t) \varrho^{p}_{{\rm sa},t} - \varrho^{p}_{{\rm sa},t} \sum_{k=1}^n  \mathrm{Tr}\left((g_k(Y_{0:t})\tilde{\Upsilon}_t^{-1} L_{0,k}(t)  + g_k(Y_{0:t})^{\dag} \tilde{\Upsilon}^{-1 \dag}L_{0,k}(t)^{\dag})\varrho^p_{\mathrm{sa},t} \right)\right)dI_t.
\end{align*}
and similarly we can write \eqref{eq:jump-reduced-SDE-p} as:
\begin{align*}
d\varrho^{p}_{{\rm sa},t} &=\left(\vphantom{\int_0^t}\mathcal{L}(t)^{pp} \varrho^{p}_{{\rm sa},t}   + z_t \right)dt+ \left(\mathcal{N}_{\varrho^p_{\mathrm{sa},t-}}(t-)  -\mathcal{I} \right)\varrho^p_{\mathrm{sa},t-} \tilde{\Xi}_{t-} dI_t.  
\end{align*}

That is, the flow of information from the $\mathcal{P}$ projection to the orthogonal $\mathcal{Q}$ projection and then back to the $\mathcal{P}$ projection is encapsulated in the signal $z$ defined in \eqref{eq:z-equation}. That is, $\varrho^q_{\mathrm{sa},t}$ acts a memory subsystem as depicted in Fig.~\ref{fig:memory}. Finally, recall that the reduced stochastic density operator $\varrho_{\mathrm{s},t}$ of the principal system can be obtained via \eqref{eq:reduce-stochastic-principal}. 

\begin{figure}[h!]
\begin{centering}
\includegraphics[scale=0.5]{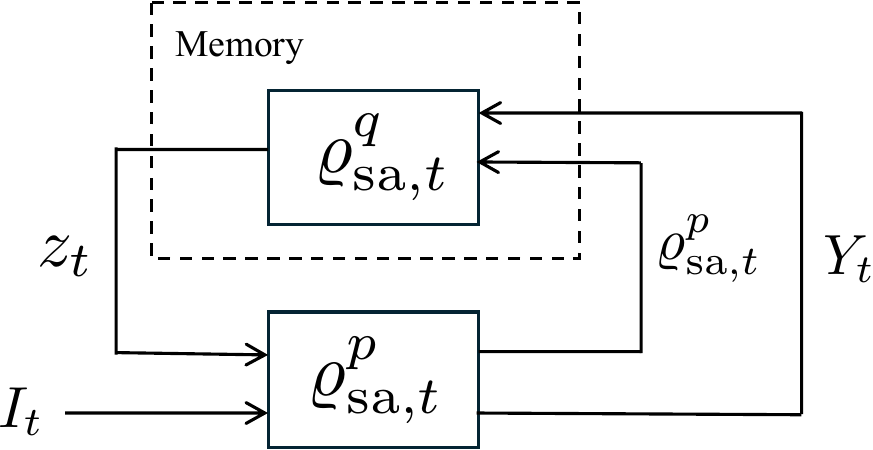}\caption{Feedback connection of the $\rho^p_{\mathrm{sa},\cdot}$ and $\rho^q_{\mathrm{sa},\cdot}$, with $\rho^q_{\mathrm{sa},\cdot}$ acting as the state of a memory subsystem (denoted by the dotted box).}\label{fig:memory}
\end{centering}
\end{figure}

This opens an approach to approximate the memory term, a problem posed in \cite{Nurd25}, by developing a suitable method to approximate $z$. 

\section{Conclusion and future work}
In this paper, we have generalized the non-Markovian stochastic SDEs for the evolution of a non-Markovian quantum system undergoing continuous measurements from \cite{Nurd25} to include measurement-feedback that is crucial to considering quantum feedback control of non-Markovian quantum systems. The derivation uses the framework of controlled QSDEs and quantum flows developed in \cite{BvH08}. Moreover, this paper derives the non-Markovian SDEs for the diffusive case  corresponding to  a generalized continuous homodyne measurement  and the pure jump case corresponding to a generalized continuous photon counting measurement, whereas \cite{Nurd25} only presented the equations for the diffusive case. The results of \cite{Nurd25} become a special case of the results of this paper.

Some open research problems that can be pursued following from this work and \cite{Nurd25} include developing a deeper theoretical understanding of the non-Markovian SDEs such as conditions for its physical realizability, i.e., the conditions on $\mathcal{L}^{pp}$ and the kernel $\mathcal{K}(t,t')$ so that $\rho^{p}_{\rm sa,t}$ is a valid density operator for all times $t$, model reduction of the SDE by approximating the signal $z$, system identification of the SDE from measurement data and the design of quantum feedback controllers for non-Markovian quantum systems. 

\bibliographystyle{IEEEtran}
\bibliography{notes-sp.bib}

\end{document}